 \journalname{}
\newcommand{\PP}{{\mathbb P}}
\newcommand{\MP}{{\mathtt{MP}}}
\newcommand{\FS}{{\mathtt{FS}}}
\newcommand{\AL}{{\mathcal{A}}}
\newcommand{\E}{{\mathcal{E}}}
\begin{document}

\title{Quantifying the accuracy of ancestral state prediction in a phylogenetic tree under maximum parsimony
}


\author{Lina Herbst        \and
        Thomas Li \and
				Mike Steel
}


\institute{L. Herbst \at
              Institute of Mathematics and Computer Science, Ernst-Moritz-Arndt University,  Greifswald, Germany.
           \and
           T. Li \at
              School of Mathematics and Statistics, University of Canterbury, Christchurch, New Zealand.
					\and
					M. Steel \at
					Biomathematics Research Centre, University of Canterbury, Christchurch, New Zealand (corresponding author) \\
              \email{mike.steel@canterbury.ac.nz}
}

\date{Received: date / Accepted: date}

\maketitle

\begin{abstract}
In phylogenetic studies, biologists often wish to estimate the ancestral discrete character state at an interior vertex $v$ of an evolutionary tree $T$  from
the states that are observed at the leaves of the tree. A simple and fast estimation method --- maximum parsimony ---
takes the ancestral state at $v$ to be any state that minimises the number of state changes in $T$ required to explain
its evolution on $T$. In this paper, we investigate the reconstruction accuracy of this estimation method further, under 
a simple symmetric model of state change, and obtain a number of new results, both for 2-state characters, and $r$--state characters ($r>2$). Our results rely on establishing new identities and inequalities,
based on a coupling argument that involves a simpler `coin toss' approach to ancestral state reconstruction. 
\keywords{Phylogenetic tree \and Markov process \and maximum parsimony \and coupling.}
\end{abstract}

\newpage

\section{Introduction}

Phylogenetic trees play a central role in evolutionary biology and in other related areas of classification (e.g. language evolution, stemmatology, ecology, epidemiology and medicine).  Typically, these trees represent a set of sampled `taxa' (e.g. species, genera, populations, individuals) as the leaves of the tree, with the vertices and edges of the tree providing a historical description of how these taxa evolved from a common ancestor \citep{fels4}.  Biologists often use discrete characteristics of the species at the leaves of a tree to try to infer (or predict) an ancestral state deep within the tree. For example, in epidemiology,  HIV sequences from sampled individuals have been used to estimate an ancestral form of the virus (e.g. for vaccine development) \citep{gas}; in  another study, ancestral state reconstruction played a key role in investigating the evolution of complex traits involved in animal vision, which varies across different species \citep{pla}.

Assuming that the characteristic in question has also evolved with the species, various methods have been devised to infer the ancestral state of that characteristic inside the tree and, in particular, at the last common ancestor of the species under study (i.e. the root of the tree). A method that can predict this root state allows any other ancestral vertex in the tree to also be studied, since one can re-root that tree on that vertex.
Thus, in this paper, we will assume that the root vertex is the one we wish to estimate an ancestral state for.
 
 The structure of this paper is as follows. First, we present some definitions concerning phylogenetic trees and a simple $r$--state Markovian model of character change on the tree, together with methods for predicting ancestral states, particularly maximum parsimony (MP).   In Section~\ref{secfun}, we concentrate on the 2-state model.  We describe an exact relationship between the reconstruction accuracies of MP on any binary tree $T$,  and  the accuracy on two trees derived from $T$ by deleting one and two leaves respectively.  We show how this allows inequalities to be established easily by induction. 
 
 
Next, in Section~\ref{secsim}, we describe a simpler ancestral prediction method that  is easier to analyse mathematically and yet is close enough to MP that it allows for  inequality results for MP to be established.   In particular, in Section~\ref{secrel}, we show that the reconstruction accuracy for this simple method is always a lower bound to MP under the 2-state model, thereby improving on existing known lower bounds.  In Section~\ref{rst}, we investigate the reconstruction accuracy for MP further in the more delicate setting when the number of states is greater than 2 and obtain some new inequality results.
In Section \ref{combo}, we present a novel combinatorial result that provides a sufficient condition for MP to infer the state at the root of a tree correctly, assuming only that the state changes in the tree are sufficiently well-spaced.  In the final section, we present a conjecture for future work.

\subsection{Definitions}

 In this paper, we consider rooted binary phylogenetic trees, which are trees in which every edge is directed away from a root vertex $\rho$ that has in-degree 0 and out-degree 1 or 2,
 and in which every non-root vertex has in-degree 1 and out-degree 0 or 2.  The vertices of out-degree 0 are the {\em leaves} of the tree.  In the case where $\rho$ has out-degree 2, we use $T$ to denote the tree, but if  $\rho$ has out-degree 1, we will  indicate this by writing $\dot{T}$ instead of $T$ and we will refer to the edge incident with this root as the {\em stem edge}.

Suppose that the root vertex $\rho$ has an associated state $F(\rho)$ that lies in some finite state space
$\AL$ of size $r \geq 2$, and that the root state evolves along the edges of the tree to the leaves according to a Markov process in which each edge $e$ has an associated probability $p_e$ of a change of state (called a {\em substitution}) between the endpoints of $e$. We refer to $p_e$ as the {\em substitution probability} for edge $e$. In this paper, we will assume that the underlying Markov process is the simple symmetric model on $r$ states, often referred to as the Neyman $r$--state model, denoted $N_r$. In this model, when a state change occurs on an edge $e=(u,v)$, each one of the  $r-1$ states that are different from the state at $u$ is assigned uniformly at random to the vertex $v$. In this way, each vertex $v$ of the tree
is assigned a random state, which we will denote as $F(v)$. 
We will denote the values of $F$ on the leaves of $T$ by the function  $f: X \rightarrow \AL$. 
This function $f=F|_X$ (the restriction of $F$ to the leaves of $T$)  is called a {\em character} in phylogenetics. Each such character has a well-defined probability under this stochastic model, and these probabilities sum to 1 over all the $r^n$ possible choices for $f$.

Given $f$, consider the set $\FS(f,T)$ of possible states that can be assigned to the root vertex of $T$ so as to minimise the total number of state changes required on the edges of $T$ to generate  $f$ at the leaves. The set $\FS(f,T)$ can be found in linear time (in $n$ and in $r$) by the first pass of the  `Fitch algorithm' \citep{fit71, har73}.  More precisely, to find $\FS(f, T)$, we assign a subset $\FS(v)$ of  $\AL$ to each vertex $v$ of $T$ in recursive fashion, starting from the leaves of $T$ and working towards the root vertex $\rho$ (we call $\FS(v)$ the {\em Fitch set} assigned to $v$).  First, each leaf $x$ is assigned the singleton set $\{f(x)\}$ as its Fitch set. Then for each vertex $v$ for which its two children $v_1$ and $v_2$ have  been assigned Fitch sets $\FS(v_1)$ and $\FS(v_2)$, respectively, the Fitch set $\FS(v)$ is determined as follows:
$$\FS(v)= \begin{cases}
\FS(v_1) \cap \FS(v_2), & \mbox{ if } \FS(v_1) \cap \FS(v_2) \neq \emptyset; \\
\FS(v_1) \cup \FS(v_2), &  \mbox{ if } \FS(v_1) \cap \FS(v_2) = \emptyset.
\end{cases}
$$
In this way, each vertex is eventually assigned a  non-empty subset of $\AL$ as its Fitch set, and 
 $\FS(f, T)$ is the Fitch set $\FS(\rho)$ that is assigned to the root vertex $\rho$.

When $\FS(f,T)$ consists of a single state,  then the method of {\em maximum parsimony} uses this state as the estimate of the unknown ancestral state
$\alpha$ at the root. When $\FS(f,T)$ has more than one state, we will select one of the states in this set uniformly at random as an estimate of the root state \citep{subset, moretaxa, ra1}.   
We will let $\MP(f,T)$ be the state selected uniformly at random from $\FS(f,T)$.

In this paper, we investigate the probability that this procedure correctly identifies the true root state $\alpha$ (note that by the symmetry in the model there is nothing special about
the choice of the root state $F(\rho$)).   We call this probability the {\em reconstruction accuracy} for maximum parsimony, denoted $RA_{\rm MP}(T)$. It is defined formally by: 

$$RA_{\rm MP}(T)  \coloneqq \PP(\MP(f, T) = \alpha | F(\rho) = \alpha).$$
Equivalently, 
  \begin{align} 
RA_{\rm MP}(T)= \sum_{\substack{\mathcal{R}: \mathcal{R} \subseteq \AL \\ \text{and } \alpha \in \mathcal{R}}} \frac{1}{| \mathcal{R} |} \cdot \PP(\FS(f,T)=\mathcal{R} | F(\rho)=\alpha). \label{RAdef}
\end{align}

\begin{figure}[ht]
\centering
\includegraphics[scale=0.7]{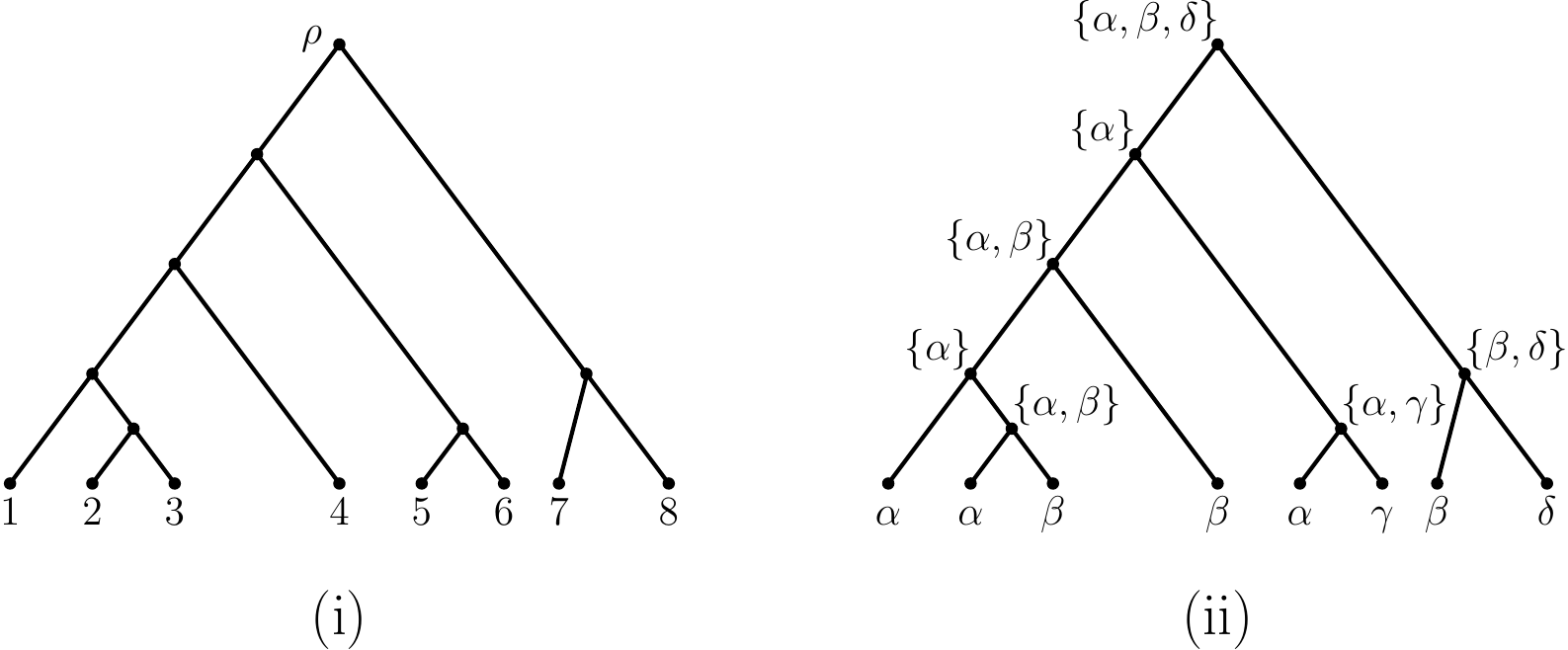}
\caption{(i) A rooted binary tree on leaf set $X=\{1,2,3,\ldots, 8\}$.  If we consider the character $f: X \rightarrow \AL=\{\alpha, \beta, \gamma,\delta\}$ defined by $f(1)=f(2)=f(5)=\alpha, f(3)=f(4)=f(7)=\beta, f(6)=\gamma, f(8)=\delta$, then the associated Fitch sets at the interior vertices are shown in (ii). 
Notice that the root Fitch set $\FS(f, T)$ consists of  three equally most-parsimonious root states, namely $\{\alpha, \beta, \delta\}$ and so $\MP(f, T)$ would be one of these states chosen with equal probability ($\frac{1}{3}$). An interesting feature of this example is that if the state of the leaf labelled 1 is changed from $\alpha$ to $\delta$,  then although $\delta$ was initially one of the most parsimonious states for the root, it  ceases to be so (instead,  $\beta$ becomes the unique most parsimonious root state).}
\label{fig1}
\end{figure}

 Because it is normally assumed that state changes occur according to an underlying continuous-time Markov  process, one has:
$$p_e \leq (r-1)/r.$$
We usually will  assume that this inequality is strict, since $p_e = (r-1)/r$ would correspond to an infinite rate of change (or an infinite temporal length) on the edge $e$ for a continuous-time Markov process. 
Given the  substitution probability $p_e$ for an edge $e$, we can formally associate a `length' for this edge as the quantity
 $\ell_e=-\frac{r-1}{r}\ln(1-\frac{r}{r-1}p_e)$. This `length' corresponds to the expected number of state changes under a continuous-time Markov-process realization of the substitution process (see e.g. \cite{fels4, ste16}).  Notice that we can write
 $p_e = \frac{r-1}{r}\left(1-\exp(-\frac{r}{r-1}\ell_e)\right)$. If we let $p(v)$ be the probability that vertex $v$ is in a different state from the root $\rho$ then
 $p(v) = \frac{r-1}{r}\left(1-\exp(-\frac{r}{r-1}L)\right)$, where $L$ is the sum of the $\ell$--lengths of the edges on the path from $\rho$ to $v$.

A special condition that is sometimes further imposed on these edge lengths  is that the edge lengths satisfy an {\em  ultrametric condition} (called a  `molecular clock' in biology),  which states that the sum of the lengths of the edges from the root to each leaf is the same.  Under that assumption, the probability $p(x)$ that leaf $x$ is in a different state from the root takes the same value for all values of $x$.  In this paper, our main results do not require this ultrametric assumption; however, we also point out how these results lead to particular conclusions in the ultrametric case. 

Note that $RA_{\rm MP}(T)$ depends on $T$, the assignment of state-change probabilities (the $p_e$ values) for the edges of $T$, and $r$ (the size of the state space $\AL$). 
The aim of this paper is to provide new relationships (equations and inequalities) for  reconstruction accuracy, extending earlier work by others \citep{her, moretaxa, ra1, subset}. 
Note that two other methods for estimating  the ancestral root state are {\em majority rule} (MR), which estimates the root state by the most frequently occurring state at the leaves
(ties are broken uniformly at random), and {\em maximum likelihood estimation} (MLE), which estimates the root state by the state(s) that maximise the probability of generating the given
character observed at the leaves. MR does not even require knowledge of the tree for estimating the root state, whereas MLE requires knowing not only the tree but also the edges lengths. 
Comparisons of these three methods were studied by \cite{gas10, gas14}.   If the edge lengths in MLE are not known,  and are therefore treated as `nuisance parameters' to be estimated (in addition to the root state) then  the resulting MLE estimate for the root state for a given character can be shown to be precisely the MP estimate under the $N_r$ model (\cite{tuf}, Theorem 6).

We end this section by collating some notation used throughout this paper.
\begin{itemize}
\item  $T$ (resp. $\dot{T}$) ---  a rooted binary tree, with a root of out-degree 2 (resp. out-degree 1),
\item $p_e$ (resp. $p_\rho$) --- the substitution probability on edge $e$, (resp. the stem edge of $\dot{T}$)  under the $N_r$ model,
\item $p(x)$ (resp. $p(w)$)  --- the probability that leaf $x$ (resp. vertex $w$) is in a different state from the root under the $N_r$ model,
\item $p_{\rm max}$ ---   the maximal value of $p(x)$ over all leaves,
\item $RA_{\rm MP}(T)$ --- the root-state reconstruction accuracy of maximum parsimony on $T$ (with its $p_e$ values) for a character generated under the $N_r$ model.
\end{itemize}

\section{A fundamental identity for reconstruction accuracy in the case where $r=2$. }
\label{secfun}

For Theorem \ref{ra} (below) we consider a rooted binary phylogenetic tree $T$ with a leaf set $X$ of size at least 3, together with two associated trees
$T'_\pi$ and $T''$  as indicated in Fig.~\ref{fig2}, which are determined by selecting a pair of leaves $y, z$ that are adjacent to a common vertex of $T$ (such a pair of leaves, called a `cherry', always exists in any binary tree with 3 or more leaves \citep{ste16}). 
The rooted binary phylogenetic tree $T'_\pi$ is obtained from $T$ by deleting the leaves $y$ and $z$; in addition, we lengthen the edge leading to $w$ slightly by putting an extra edge from $w$ to a new leaf $w'$ with substitution probability  $\pi$. In order to keep $T'_\pi$ binary, the vertex $w$ is suppressed.
An additional tree  $T''$ is obtained from $T'_\pi$ by deleting the edge leading to $w$ and edge $(w,w')$. Again, we suppress the resulting vertex of degree 2 in order to keep the tree binary. 

\begin{figure}[H]
\centering
\includegraphics[scale=0.85]{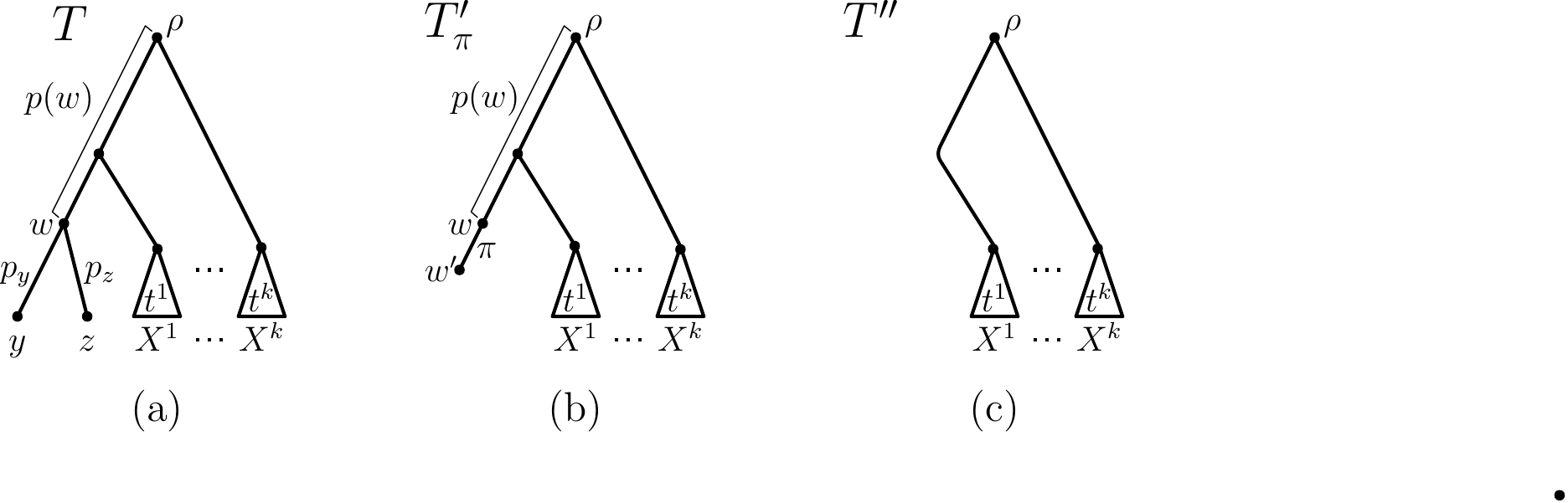}
\caption{(a) A rooted binary phylogenetic tree $T$ with leaf set $X$ where $p(w)$ is the probability that $w$ is in a different state from the root $\rho$,  
and $p_y$ and $p_z$ are the probabilities that leaves $y$ and $z$ are in a different state from $w$.  The pendant subtrees adjacent to the path from $w$ up to $\rho$ are denoted $t^1, \ldots, t^k$ with leaf sets $X^1, \ldots, X^k$, respectively. 
(b) A rooted binary phylogenetic tree $T'_\pi$ derived from $T$ by deleting leaves $y$ and $z$ and  attaching a new leaf $w'$ to $w$ (which is then suppressed). The value $\pi $ is the probability of a change of state from $w$ to the new leaf $w'$.  (c) The rooted binary tree $T''$ obtained from $T$ by deleting the leaves $y$ and $z$, their incident edges and the other edge incident with $w$, then suppressing the resulting vertex of degree 2. }
\label{fig2}
\end{figure}

We now state the main result of this section. Given $T$, $T'_\pi$ and $T''$  as described  we have the following fundamental equation for MP as ancestral state reconstruction method under the $N_2$ model.

\begin{theorem} 
\label{ra}
Let $T$ be a rooted binary phylogenetic tree with a leaf set $X$ of size at least 3.   For the reconstruction accuracy of maximum parsimony under the $N_2$ model we then have:
$$RA_{\rm MP}(T) = \theta \cdot RA_{\rm MP}(T'_\pi) + (1-\theta) \cdot RA_{\rm MP}(T''),$$
where $\theta$ is the probability that the leaves $y$ and $z$ are in the same state, and 
$\pi = p_yp_z/\theta \leq \min\{p_y, p_z\}$ (where $p_y$ and $p_z$ are the substitution probabilities for edges $(w,y)$ and $(w,z)$, respectively).

\end{theorem}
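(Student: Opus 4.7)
The plan is to condition on whether the two cherry leaves $y$ and $z$ receive the same state. This partitions the sample space into two events of probabilities $\theta$ and $1-\theta$, and I would argue that each event's contribution to $RA_{\rm MP}(T)$ matches the corresponding term on the right-hand side.

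The first step is a structural analysis of the Fitch algorithm at $w$. If $f(y) \neq f(z)$, then $\FS(w) = \AL$, and since intersecting $\AL$ with any non-empty Fitch set returns that set, $w$'s contribution is ``absorbed'' by its sibling, so the Fitch set computed at $\rho$ is exactly what the algorithm produces on $T''$ from the character $f|_{X\setminus\{y,z\}}$. If $f(y) = f(z) = c$, then $\FS(w) = \{c\}$, and the algorithm from $w$ upward is identical to one run on a tree in which $w$ is a leaf with state $c$. In both cases, $\MP(f,T)$ is therefore drawn from the same set as on the reduced tree.

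The second step identifies the conditional character distributions. For the event $f(y) \neq f(z)$, I would compute $\PP(f(y) \neq f(z) \mid F(w) = a) = p_y(1 - p_z) + (1 - p_y) p_z$, which equals $1 - \theta$ independently of $a$. Hence this event is independent of $F(w)$ (given $F(\rho) = \alpha$), and by the Markov property it is also independent of $f|_{X\setminus\{y,z\}}$. Consequently, the conditional distribution of $f|_{X\setminus\{y,z\}}$ given that event coincides with the unconditional $N_2$-character distribution on $T''$ (suppressing the degree-2 vertices preserves all marginal leaf distributions), yielding the contribution $(1-\theta)\cdot RA_{\rm MP}(T'')$.

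For the event $f(y) = f(z) = c$, a Bayes computation gives $\PP(F(w) = \bar c \mid f(y) = f(z) = c) = p_y p_z / \theta$, so from the perspective of the rest of the tree, the common leaf value $c$ behaves exactly like the value $F(w')$ of a surrogate leaf $w'$ joined to $w$ by an edge of substitution probability $\pi = p_y p_z / \theta$. I would make this precise by verifying that the joint distribution of $(F(\rho), f|_{X\setminus\{y,z\}}, c)$, rescaled by $1/\theta$, equals the joint distribution of $(F(\rho), g|_{X\setminus\{y,z\}}, g(w'))$ generated on $T'_\pi$, which then gives $\theta \cdot RA_{\rm MP}(T'_\pi)$. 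The main obstacle is that verifying this joint identity (rather than merely the marginal at $w$) requires invoking the Markov property on both the ancestral path above $w$ and the pendant subtrees $t^1,\ldots,t^k$, and carefully handling the suppression of $w$ in $T'_\pi$. The bound $\pi \leq \min\{p_y,p_z\}$ then follows by noting that $\pi \leq p_y$ reduces to $p_z \leq \theta$, i.e.\ $p_z \leq 1/2$, which always holds under $N_2$.
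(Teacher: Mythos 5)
Your proposal follows essentially the same route as the paper: condition on whether the cherry leaves agree, observe that in the disagreement case $\FS(w)=\{\alpha,\beta\}=\AL$ is absorbed by its sibling so the root Fitch set is exactly that of $T''$ (and that the event $\mathcal{E}_2$ has probability $(1-p_y)p_z+p_y(1-p_z)$ independently of $F(w)$, hence is independent of the rest of the character), and couple the agreement case to $T'_\pi$ via a surrogate leaf. The substance is right, but one formula is stated with the conditioning reversed: $\PP(F(w)=\bar c\mid f(y)=f(z)=c)$ is a posterior that depends on the prior $p(w)$ and does \emph{not} equal $p_yp_z/\theta$ in general. The identity that actually drives the coupling is the forward one: conditional on $F(w)=a$ and on the agreement event $\mathcal{E}_1$, the common cherry value differs from $a$ with probability $p_yp_z/\bigl((1-p_y)(1-p_z)+p_yp_z\bigr)=\pi$, which is precisely the transition probability on the edge $(w,w')$ in $T'_\pi$; together with the Markov property (the cherry is conditionally independent of everything else given $F(w)$, and $\PP(\mathcal{E}_1\mid F(w))=\theta$ does not depend on $F(w)$) this yields the joint-distribution identity you describe. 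The paper establishes the equivalent fact by showing $\PP(f(y)=f(z)=\alpha\mid\mathcal{E}_1)=(1-p(w))(1-\pi)+p(w)\pi=\PP(F(w')=\alpha)$ and matching the coefficients of $1-p(w)$ and $p(w)$, which is the same conditional-on-$F(w)$ statement in disguise. With that one correction, the remainder of your argument, including $\pi\le\min\{p_y,p_z\}$ reducing to $p_y,p_z\le\tfrac12$, goes through and matches the paper's proof.
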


\begin{proof}
Let $T$ be a rooted binary phylogenetic tree with root $\rho$. By the symmetry in the model, we assume, without loss of generality, that  the root  is in state $\alpha$. 
Let  $\mathcal{F}$ denote the event that $\MP(f, T)=\alpha$ (recall that in the case of two equally-most-parsimonious states, one is selected uniformly at random). 
Let $\mathcal{E}_1$ be the event that leaf $y$ and leaf $z$ are in the same state (i.e. $f(y)=f(z)$), and let $\mathcal{E}_2$ be the complementary event (i.e. $f(y) \neq f(z)$). Thus  $\theta =  \PP(\mathcal{E}_1)$ and  $1- \theta = \PP(\mathcal{E}_2)$.  By the law of total probability we have:
\begin{align*}
RA_{\rm MP}(T) &= \PP(\mathcal{F}) = \PP(\mathcal{F}|\mathcal{E}_1) \PP(\mathcal{E}_1) + \PP(\mathcal{F}|\mathcal{E}_2) \PP(\mathcal{E}_2) \\
&= \PP(\mathcal{F}|\mathcal{E}_1) \theta + \PP(\mathcal{F}|\mathcal{E}_2) (1-\theta).
\end{align*}
We use this to establish Theorem~\ref{ra} by establishing the following two claims:
\begin{itemize}
\item[] Claim (i):  $RA_{\rm MP}(T'_\pi)$ equals $\PP(\mathcal{F}|\mathcal{E}_1)$;
\item[] Claim (ii):  $RA_{\rm MP}(T'')$ equals $\PP(\mathcal{F}|\mathcal{E}_2)$.
\end{itemize}

To establish Claim (i), we show that by an appropriate choice of $\pi$,  the probability that the leaves $y$ and $z$ are in state $\alpha$, conditional on the event $\mathcal{E}_1$, is exactly equal to the probability that $w'$ is in state $\alpha$; that is:
\begin{equation}
\label{ppeq}
\PP(f(y)=f(z)=\alpha |\mathcal{E}_1) = \PP(F(w')=\alpha).
\end{equation}
A similar equality will then hold for $\beta$ (i.e. 
$\PP(f(y)=f(z)=\beta |\mathcal{E}_1) = \PP(F(w')=\beta),$ since both probabilities sum up to 1).
These two identities then ensure that $RA_{\rm MP}(T'_\pi)$ equals $\PP(\mathcal{F}|\mathcal{E}_1)$, which is Claim (i). 
Thus for Claim (i), it suffices to establish Eqn.~(\ref{ppeq}) for a suitable choice of $\pi$. 


Recall that $1-p(w)$ is the probability that $w$ is in state $\alpha$, since the root is assumed to be in state $\alpha$. Then, the probability that $y$ and $z$ are in state $\alpha$ is 
\begin{align*}
\PP(f(y)=f(z)=\alpha) = (1-p(w))(1-p_y)(1-p_z)+p(w)p_yp_z,
\end{align*}
where $p_y$ and $p_z$ are the probabilities of change on edge $(w,y)$ and on edge $(w,z)$. Similarly, the probability that $y$ and $z$ are both in state $\beta$ is:
\begin{align*}
\PP(f(y)=f(z)=\beta) = p(w)(1-p_y)(1-p_z)+(1-p(w))p_yp_z.
\end{align*}
Adding these together, the probability of $\mathcal{E}_1$ is given by:
\begin{align*}
\PP(\mathcal{E}_1) &= (1-p(w))(1-p_y)(1-p_z)+p(w)p_yp_z + p(w)(1-p_y)(1-p_z)+(1-p(w))p_yp_z \\
&=(1-p_y)(1-p_z)+p_yp_z,
\end{align*}
which is independent of $p(w)$. \\
With substitution probability $\pi$ on edge $(w,w')$,  the probability that $w'$ is in state $\alpha$ is 
\begin{align}
\PP(F(w')=\alpha)=(1-p(w))(1-\pi) + p(w)\pi. \label{probP}
\end{align}

Now, 
\begin{align}
\PP(f(y)=f(z)=\alpha |\mathcal{E}_1)=\frac{(1-p(w))(1-p_y)(1-p_z)+p(w)p_yp_z}{\theta}, \label{probP2}
\end{align}
(recall that $\theta= \PP(\E_1)$).
We can write \eqref{probP2} as
\begin{align}
\PP(f(y)=f(z)=\alpha |\mathcal{E}_1)=(1-p(w))U+p(w)V, \label{probP3}
\end{align}
where $U=\frac{(1-p_y)(1-p_z)}{\theta}$ and $V=\frac{p_yp_z}{\theta}$ (note that $U+V=1$). Comparing \eqref{probP} with \eqref{probP3}, we see that if we take $\pi=V=\frac{p_yp_z}{\theta}$, then Eqn.~(\ref{ppeq}) (and hence Claim (i)) holds.

Notice also that with this choice, $\pi$ is less or equal to $p_y$ and to $p_z$.  For example, $\pi \leq p_y$ is equivalent to:
\begin{align*}
\pi&=\frac{p_yp_z}{(1-p_y)(1-p_z)+p_yp_z} \leq p_y \\
&\Leftrightarrow p_yp_z \leq p_y ((1-p_y)(1-p_z)+p_yp_z) \\
&\Leftrightarrow p_z \leq 1-p_z,
\end{align*}
which holds, since $p_z \leq \frac{1}{2}$.

To show that $RA_{\rm MP}(T'')=\PP(\mathcal{F}|\mathcal{E}_2)$, first notice that the probability of event $\mathcal{E}_2$ does not depend on the state at $w$ (i.e. $\PP(\mathcal{E}_2 | F(w)=\alpha)=\PP(\mathcal{E}_2 | F(w)=\beta)$), because:
\begin{align*}
\PP(\mathcal{E}_2) &= (1-p(w)) \big( (1-p_y)p_z + p_y(1-p_z) \big) + p(w) \big( (1-p_y)p_z + p_y(1-p_z) \big) \\
&= (1-p_y)p_z + p_y(1-p_z).
\end{align*}
Moreover, notice that when the leaves $y$ and $z$ take the states $\alpha, \beta$ (or $\beta, \alpha$) then the Fitch set for $w$ is $\{ \alpha, \beta \}$, so the state that is chosen as the ancestral state for $\rho$ is completely determined by the subtree $T''$. \\
Together with the argument above, this gives $RA_{\rm MP}(T'')=\PP(\mathcal{F}|\mathcal{E}_2)$, as required.
\hfill$\Box$
\end{proof}

Theorem \ref{ra} leads to the following corollary, which extends earlier results by \cite{subset} and by \cite{ra1} in which the  ultrametric constraint on the edge lengths was imposed (here this assumption is lifted).  

\begin{corollary} \label{1-p}
Let $T$ be a rooted binary phylogenetic tree with leaf set $X$.   Under the $N_2$ model: 
$$RA_{\rm MP}(T) \geq 1-p_{\rm max},$$
where $p_{\rm max}=\max\{p(x): x\in X\}$, and $p(x)$ is the probability that leaf $x$ has a different state from the root.
\end{corollary}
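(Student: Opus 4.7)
The plan is to prove the corollary by induction on $|X|$, using Theorem~\ref{ra} as the recursion. For the base case $|X|=2$, the tree is a cherry at the root with substitution probabilities $p_y, p_z$ on the two pendant edges; a direct calculation gives $RA_{\rm MP}(T) = 1 - (p_y + p_z)/2$, which trivially exceeds $1 - \max\{p_y, p_z\} = 1 - p_{\rm max}$.

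For the inductive step with $|X| \geq 3$, pick a cherry $\{y,z\}$ with common parent $w$ and apply Theorem~\ref{ra}:
\[
RA_{\rm MP}(T) = \theta \cdot RA_{\rm MP}(T'_\pi) + (1-\theta)\cdot RA_{\rm MP}(T'').
\]
Since both $T'_\pi$ and $T''$ have strictly fewer leaves than $T$, the induction hypothesis gives $RA_{\rm MP}(T'_\pi)\geq 1 - p_{\rm max}(T'_\pi)$ and $RA_{\rm MP}(T'')\geq 1 - p_{\rm max}(T'')$, where $p_{\rm max}(\cdot)$ denotes the maximum leaf $p$-value on the indicated tree. It then suffices to establish the two monotonicity bounds $p_{\rm max}(T'_\pi) \leq p_{\rm max}(T)$ and $p_{\rm max}(T'') \leq p_{\rm max}(T)$, since a convex combination of two quantities each at least $1-p_{\rm max}(T)$ is itself at least $1-p_{\rm max}(T)$.

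The bound $p_{\rm max}(T'')\leq p_{\rm max}(T)$ is essentially immediate: the leaves of $T''$ form a subset of $X\setminus\{y,z\}$, and suppressing degree-$2$ vertices on the root-to-leaf paths does not alter any root-to-leaf substitution probability. The main obstacle is the corresponding bound for $T'_\pi$, where a new leaf $w'$ appears with
\[
p(w') = p(w)(1-\pi) + (1-p(w))\pi = p(w) + \pi(1-2p(w)).
\]
The key observation is that the affine map $q \mapsto p(w) + q(1-2p(w))$ is non-decreasing in $q$, precisely because the $N_2$ model forces $p(w)\leq 1/2$; combining this monotonicity with the bound $\pi\leq \min\{p_y, p_z\}$ supplied by Theorem~\ref{ra} yields $p(w')\leq p(y)$ (and likewise $p(w')\leq p(z)$). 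Hence $w'$ carries a $p$-value no larger than $p_{\rm max}(T)$, while all remaining leaves of $T'_\pi$ inherit their $p$-values unchanged from $T$, so $p_{\rm max}(T'_\pi)\leq p_{\rm max}(T)$, closing the induction.
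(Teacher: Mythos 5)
Your proof is correct and follows essentially the same route as the paper's: induction on the number of leaves via Theorem~\ref{ra}, with the key step being that $\pi \leq \min\{p_y,p_z\}$ combined with the monotonicity of $q \mapsto p(w)+q(1-2p(w))$ (valid since $p(w)\leq \tfrac{1}{2}$ under $N_2$) forces $p(w')\leq p_{\rm max}(T)$, while $T''$ inherits its leaf probabilities unchanged. The only omission is the one-leaf base case, which is actually needed because $T''$ has a single leaf when $|X|=3$; there $RA_{\rm MP}=1-p(x)=1-p_{\rm max}$ trivially, so the induction closes.
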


\begin{proof}
We use induction on the number of leaves $n$. For $n=1$,  $p_{\rm max}=p_x$ and thus the reconstruction accuracy is given by $RA_{\rm MP}(T)=1-p_{\rm max}$. For  $n=2$, and a tree with leaves $x, y$:
\begin{align*}
RA_{\rm MP}(T) &= (1-p_x)(1-p_y) + \frac{1}{2} \big( p_x (1-p_y) + (1-p_x) p_y \big) \\
&= 1-p_x-p_y+p_xp_y+ \frac{1}{2} \big( p_x -p_xp_y +p_y -p_xp_y \big) \\
&=1- \frac{1}{2} p_x - \frac{1}{2} p_y  \geq 1-p_{\rm max}.
\end{align*}
This completes the base case of the induction.

Now, assume that the claim holds for all rooted binary phylogenetic trees with less than $n$ leaves, where $n\geq 3$, and consider a tree with $n$ leaves represented as shown in Fig.~\ref{fig2}.
Let $p':= {\rm max} \{p(x): x \in (X\ \setminus \{y,z\}) \cup \{w\} \}$
and let
$p'':= {\rm max} \{p(x): x \in X \setminus \{y,z\}\}$.
Thus, $p',p'' \leq p_{\rm max}$ (the inequality for $p''$ is clear;  for $p'$ we use  $\pi \leq p_y, p_z$ from the last part of Theorem~\ref{ra}). Now,  from Theorem \ref{ra}, we have:
$$RA_{\rm MP}(T) = \theta \cdot RA_{\rm MP}(T'_\pi) + (1-\theta) \cdot RA_{\rm MP}(T''),$$
where $RA_{\rm MP}(T'_\pi) \geq 1-p'$ and $RA_{\rm MP}(T'') \geq 1-p''$ by the induction hypothesis. Thus:
\begin{align*}
RA_{\rm MP}(T) &\geq \theta (1-p') + (1- \theta) (1-p'') \\
&\geq \theta (1-p_{\rm max}) + (1- \theta) (1-p_{\rm max}) \\
&\qquad \text{since } p \geq p' \text{ and } p \geq p'' \\
&= (\theta + 1 - \theta) (1-p_{\rm max}) = 1-p_{\rm max}, 
\end{align*}
which completes the proof.
\hfill$\Box$
\end{proof}

\section{A `coin-toss' reconstruction method ($\varphi$)}
\label{secsim}
We now consider a method for estimating the ancestral state that is similar to the Fitch algorithm for MP, but which uses coin tosses to simplify the process.  We do this because it allows us to obtain results concerning MP by a coupling argument that relates MP  to this simpler method that is easier to analyse mathematically.   The coin-toss method works as follows: given a rooted binary phylogenetic tree $T$ and a character $f$ at the leaves of $T$, the method proceeds from the leaves to the root, just like
the Fitch algorithm described earlier. However, rather than assigning sets of states to each vertex, the coin toss method assigns a single state to each vertex.

More precisely, the coin-toss method starts (similarly to the Fitch algorithm) by assigning each leaf  the state given by the character $f$. For a vertex $v$ for which both direct descendants have been assigned states, if both these states are the same, then this state is also assigned to $v$. On the other hand, if the direct descendants have different states, than a fair coin is tossed to decide which of the two states to assign to $v$. This procedure  is continued upwards along the tree until the root is assigned a state.  We let  $\varphi$ denote this coin-toss method for ancestral state reconstruction, and denote the state selected by this method as $\varphi(T, f)$.  Let 
 $RA_{\varphi}(T)$ denote its reconstruction accuracy (i.e. the probability that it predicts the true root state in the $r$-state model, which equals
 $\PP(\varphi(T, f)= \alpha | F(\rho)= \alpha)$ for any state $\alpha$).

\begin{theorem} \label{coin-toss}
Let $T$ be a rooted binary phylogenetic tree with leaf set $X$. For $x \in X$, let $d(x)$ denote the number of edges between the root $\rho$ of $T$  and leaf $x$.  For the $N_r$ model (for any $r\geq 2$) we have: 
\begin{itemize}
\item[(i)]
$RA_{\varphi}(T) =  1- \sum_{x \in X} \Big( \frac{1}{2} \Big)^{d(x)} p(x);$
\item[(ii)]
$RA_{\varphi}(T) \geq 1-p_{\rm max}$,  and, 
\item[(iii)]  in the ultrametric setting, $RA_{\varphi}(T) = 1-p_{\rm max}$,
\end{itemize}
where $p_{\rm max}=\max\{p(x): x\in X\}$, and $p(x)$ is the probability that leaf $x$ has a different state from the root.
\end{theorem}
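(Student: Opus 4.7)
The plan is to establish part (i) via a distributional identity for the state output by the coin-toss method, from which (ii) and (iii) drop out immediately. The core lemma I would prove is that, conditional on a fixed character $f$ at the leaves,
\[
\PP(\varphi(T, f) = s \mid f) = \sum_{x\in X} \Bigl(\tfrac{1}{2}\Bigr)^{d(x)} \mathbf{1}[f(x) = s]
\]
for every state $s \in \AL$. An equivalent reading is that $\varphi(T,f)$ is distributed as $f(Y)$, where $Y$ is the random leaf obtained by starting at $\rho$ and, at each internal vertex, choosing one of the two children uniformly at random.

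I would prove this lemma by induction on the height of the subtree rooted at a vertex $v$, working with the state $S(v)$ assigned to $v$ by the coin-toss procedure. At a leaf the formula reduces to $\mathbf{1}[f(v)=s]$. For an internal vertex $v$ with children $v_1, v_2$, the procedure assigns $S(v) = S(v_i)$ each with probability $1/2$ whether or not the children agree (if they agree the coin flip is irrelevant, but the equal split still holds). Hence
\[
\PP(S(v) = s \mid f) = \tfrac{1}{2}\PP(S(v_1)=s \mid f) + \tfrac{1}{2}\PP(S(v_2)=s \mid f),
\]
and the factor of $\tfrac{1}{2}$ converts $d_{v_i}(x)$ into $d_v(x) = d_{v_i}(x) + 1$, closing the induction.

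With the lemma in hand, averaging over the Markov-generated $f$ conditional on $F(\rho) = \alpha$ gives
\[
RA_{\varphi}(T) = \sum_{x \in X} \Bigl(\tfrac{1}{2}\Bigr)^{d(x)} \PP(f(x) = \alpha \mid F(\rho)=\alpha) = \sum_{x \in X} \Bigl(\tfrac{1}{2}\Bigr)^{d(x)} (1 - p(x)),
\]
and then the Kraft-style identity $\sum_{x\in X}(1/2)^{d(x)} = 1$, verified by a one-line induction on $|X|$ using that every internal vertex has out-degree exactly $2$, yields the formula in (i). Part (ii) is the immediate bound obtained by replacing each $p(x)$ with $p_{\rm max}$ in the sum, and (iii) is its equality case: under the ultrametric assumption every leaf has the same root-to-leaf $\ell$-length, hence the same $p(x) = p_{\rm max}$, and the inequality collapses to equality. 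No step looks like a real obstacle; the only care required is keeping the two sources of randomness (coin tosses versus the $N_r$ character generation) clearly separated, which is precisely what the conditioning on $f$ in the lemma accomplishes.
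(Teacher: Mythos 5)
Your proposal is correct and takes essentially the same route as the paper: the paper's proof of (i) rests on the observation that $\varphi$ is stochastically equivalent to the ``reverse'' root-to-leaf fair random walk, which is precisely the content of your conditional distributional lemma $\PP(\varphi(T,f)=s\mid f)=\sum_{x\in X}(1/2)^{d(x)}\mathbf{1}[f(x)=s]$, and parts (ii) and (iii) are handled identically. The only difference is that you supply the inductive verification of this equivalence (via the linear recursion at each internal vertex), a step the paper asserts without proof.
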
 


\begin{proof}
{\em Part (i):}  Let $T$ be a rooted binary phylogenetic tree with root $\rho$ and leaf set $X$. Start at the root of $T$ and apply the following `reverse' process:  toss a fair coin and, depending on the outcome,  select one of the two children of  $\rho$ with equal probability. We keep going away from the root in this way until a leaf is reached. The root state is then estimated as the state at that leaf. Note that the reverse procedure (which proceeds from the root to the leaves) is stochastically identical in its estimated root state as the original coin-toss procedure $\varphi$.  Therefore, we have:
\begin{align}
RA_{\varphi}(T) &= \sum_{x \in X} \Big( \frac{1}{2} \Big)^{d(x)} (1-p(x))= 1- \sum_{x \in X} \Big( \frac{1}{2} \Big)^{d(x)} p(x), \label{raphi}
\end{align}
as claimed.  This establishes Part (i).

For Part (ii), with $p_{\rm max}=\max \{p(x): x \in X\}$, we have:
\begin{align*}
RA_{\varphi}(T) &= 1- \sum_{x \in X} \Big( \frac{1}{2} \Big)^{d(x)} p(x) &&\text{by } \eqref{raphi} \\
&\geq 1- p \sum_{x \in X} \Big( \frac{1}{2} \Big)^{d(x)} &&\text{because }  p\geq p(x) \text{ for all } x, \\
&= 1-p_{\rm max}, &&\text{because } \sum_{x \in X} \Big( \frac{1}{2} \Big)^{d(x)} =1,
\end{align*}
which gives Part (ii). \\
For Part (iii), we again observe that the reverse procedure for $\varphi$ is stochastically identical in its estimated root state to the coin toss procedure $\varphi$. 
Thus the reconstruction accuracy of $\varphi$ is just the probability that the leaf that is sampled  has the same state as the root, and this is clearly just $1-p_{\rm max}$ in case of an ultrametric tree and gives us Part $(iii)$ of the theorem.  
\hfill$\Box$
\end{proof}

Note that the reverse description of $\varphi$ should not be confused with the following even simpler estimation method: Select a leaf $x$ uniformly at random and estimate the ancestral root state by the state at $x$.  This method is stochastically equivalent to $\varphi$ only when $T$ is a complete balanced binary tree with $n=2^k$ leaves. In general, however, 
different leaves will have different probabilities of being chosen by the `reverse' description of $\varphi$, depending on the shape of the tree.

\subsection{Trees with a stem edge}

Shortly,  we will need to consider the reconstruction accuracy of a rooted binary tree $\dot{T}$ that has a root $\rho$ of out-degree 1, and so we pause to describe how this is related to the reconstruction accuracy of the tree $T$ adjacent to $\rho$.
Consider the stem edge leading from this degree-1 root $\rho$ to its child $\rho'$ and let $T$ be the tree obtained by removing this edge.  We can extend the definition of $RA_{\rm MP}$ and $RA_\varphi$ to $\dot{T}$ by simply assigning the predicted root state for $\rho'$ (for $T$) to the root $\rho$ of $\dot{T}$.  The following lemma describes a linear identity between the reconstruction accuracy of $\dot{T}$ and $T$ for MP and the coin-toss method $\varphi$.
\begin{lemma}
\label{lemhelps}
Under the $N_r$ model, suppose that the substitution probability for the stem edge $(\rho, \rho')$  of $\dot{T}$ is $p_\rho$.  If $M$ denotes either the method ${\rm MP}$ or $\varphi$, we then have:
$$RA_{M}(\dot{T})  = \left(1-\frac{r}{r-1}p_\rho\right)RA_M(T) + \frac{p_\rho}{r-1}.$$ 
\end{lemma}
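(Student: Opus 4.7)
The plan is to reduce $\dot{T}$ to $T$ by conditioning on the (random) state $F(\rho')$ at the child of the stem root, exploiting the fact that both estimators assign to $\rho$ whatever state they compute at $\rho'$. So for any character $f$ at the leaves $X$ (which is the same for $T$ and $\dot{T}$), $M(f,\dot{T}) = M(f,T)$, and the only thing that differs between the two accuracies is the identity of the ``true'' root with respect to which correctness is measured.

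Step one: fix the root of $\dot{T}$ in state $\alpha$, then condition on $F(\rho')$. Under the $N_r$ model on the stem edge,
$$\PP(F(\rho')=\alpha \mid F(\rho)=\alpha) = 1-p_\rho, \qquad \PP(F(\rho')=\beta \mid F(\rho)=\alpha) = \tfrac{p_\rho}{r-1}$$
for each $\beta \neq \alpha$. Step two: invoke the full symmetry of the setup. Both the Neyman model and the two estimators (MP with uniform tie-breaking, and the coin-toss rule $\varphi$) are equivariant under arbitrary permutations of the state labels in $\AL$. Consequently, conditional on $F(\rho')=\alpha$ the probability that $M$ outputs $\alpha$ equals $RA_M(T)$ by definition, while conditional on $F(\rho')=\beta$ with $\beta\neq \alpha$, the label symmetry implies that every state in $\AL\setminus\{\beta\}$ is equally likely to be returned. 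Since these $r-1$ probabilities sum to $1 - RA_M(T)$, each one equals $(1-RA_M(T))/(r-1)$.

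Step three: apply the law of total probability and simplify:
\begin{align*}
RA_M(\dot{T})
&= (1-p_\rho)\,RA_M(T) + p_\rho \cdot \frac{1-RA_M(T)}{r-1} \\
&= \Bigl(1 - p_\rho - \tfrac{p_\rho}{r-1}\Bigr) RA_M(T) + \frac{p_\rho}{r-1} \\
&= \Bigl(1 - \tfrac{r}{r-1} p_\rho\Bigr) RA_M(T) + \frac{p_\rho}{r-1},
\end{align*}
which is the claimed identity.

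The only substantive point that needs care is the permutation-symmetry claim used in step two; this is the step I would single out as ``the main obstacle.'' For $\varphi$ it is immediate because the coin tosses do not look at the state labels, so the method's output commutes with relabellings of $\AL$. For MP the argument is the same in spirit: the Fitch sets $\FS(v)$ and the uniform tie-breaking both commute with permutations of $\AL$, so for any $\sigma \in \mathrm{Sym}(\AL)$ and any character $f$, $\MP(\sigma\circ f,T) = \sigma(\MP(f,T))$ in distribution. Combined with the fact that the conditional law of the character given $F(\rho')=\beta$ is obtained from the law given $F(\rho')=\alpha$ by applying the transposition swapping $\alpha$ and $\beta$, this yields exactly the equal-probability statement used above.
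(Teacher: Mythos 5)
Your proposal is correct and follows essentially the same route as the paper's own proof: condition on the state at $\rho'$, use the label symmetry of the model and of both estimators to conclude that $\PP(M(f,T)=\alpha\mid F(\rho')=\beta)=(1-RA_M(T))/(r-1)$ for $\beta\neq\alpha$, and apply the law of total probability. Your extra paragraph justifying the permutation-equivariance of the Fitch sets and the coin-toss rule makes explicit a symmetry the paper simply asserts, but the argument is the same.
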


\begin{proof}
By considering the two possible cases (no substitution on the stem edge, and a substitution to one of the $r-1$ non-root states),  the law of total probability gives:
\begin{equation}
\label{raa1}
RA_{M}(\dot{T}) = (1-p_\rho)\cdot RA_{M} (T) + p_\rho \cdot \PP(M(f, T) = \alpha|F(\rho') =\beta).
\end{equation}
for any state $\beta \neq \alpha$ (the choice does not matter because of the symmetry in the model). 
Now:
\begin{equation}
\label{raa2}
\sum_{\gamma \in \AL}\PP(M(f, T) = \gamma|F(\rho') = \beta) =1.
\end{equation}
The term on the left of this last equation can also be written as:
$$\PP(M(f, T)= \beta| F(\rho') =\beta) + \sum_{\gamma \neq \beta}  \PP(M(f, T) = \gamma|F(\rho') =\beta).$$
Moreover, the  $r-1$ probabilities in the summation term on the right of this last equation are all equal (again by the symmetries in the model). In particular, each of these $r-1$ probabilities is $\PP(M(f, T) = \alpha|F(\rho') =\beta)$.  
Combining this observation with Eqn.~(\ref{raa2}) gives:
$$1= RA_M(T) + (r-1)\PP(M(f, T) = \alpha|F(\rho') =\beta),$$
which rearranges to become:
$$ \PP(M(f, T) = \alpha | F(\rho') =\beta) = \frac{1-RA_M(T)}{r-1}.$$
Finally, substituting this expression into  Eqn.~(\ref{raa1}) gives the expression in the lemma.
\hfill$\Box$
\end{proof}

\subsection{Recursive equations for $RA_\varphi$}

We now consider a rooted binary phylogenetic tree $T$ with a root $\rho$ of out-degree 2, along with its two maximal pendant subtrees $T_1$ and $T_2$ with roots $\rho_1$ and $\rho_2$, respectively.
Let $\dot{T_1}$ be the tree obtained from $T$ by deleting $T_2$ and its incident edge $e$ and associated $p_e$ value (thus $\dot{T_1}$ is $T_1$ with the additional stem edge joining $\rho_1$ to $\rho$). Define $\dot{T_2}$ similarly, as indicated in Fig.~\ref{fig3}, and let  $p_i$ be the substitution probability for the  edge $(\rho, \rho_i)$. 
   
\begin{figure}[H]
\centering
\includegraphics[scale=0.9]{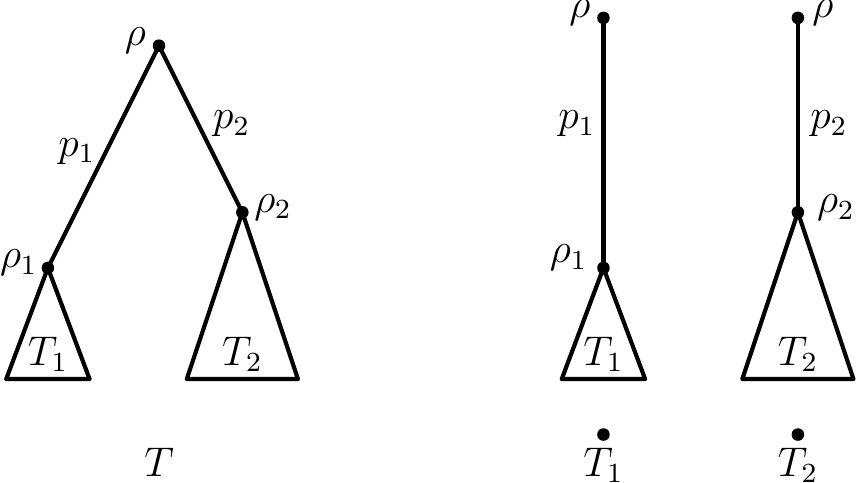}
\caption{{\em Left:} The tree $T$  with its two maximal subtrees $T_1$ and $T_2$. {\em Right:} The trees $\dot{T_1}$ and $\dot{T_2}$ obtained by attaching a stem edge to $T_1$ and $T_2$, with the same substitution probability $p_i$ as in $T$.}
\label{fig3}
\end{figure}

\begin{theorem} \label{coin-toss-average}
Let $T$ be a rooted binary phylogenetic tree with leaf set $X$.  Under the $N_r$ model,  the following identity for $RA_{\varphi}(T)$ holds:
\begin{align*}
RA_{\varphi}(T) = \frac{1}{2} \big( RA_{\varphi}(\dot{T_1}) + RA_{\varphi}(\dot{T_2}) \big).
\end{align*}
\end{theorem}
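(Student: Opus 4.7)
The plan is to prove the identity by combining the closed-form expression of Theorem~\ref{coin-toss}(i) with the reverse-process interpretation of $\varphi$, applied both to $T$ and to each of $\dot{T_1}, \dot{T_2}$. The key intuition is that the very first step of the reverse procedure on $T$ is a fair coin toss at $\rho$ selecting between the two children $\rho_1$ and $\rho_2$; conditional on that choice, the remaining random walk is identical in distribution to the reverse procedure running on the corresponding tree $\dot{T_i}$, because the degree-1 root of $\dot{T_i}$ forces a deterministic move from $\rho$ to $\rho_i$ before the first genuine coin toss occurs.

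First, I would extend Theorem~\ref{coin-toss}(i) to the stem-edge setting: if $d_{T_i}(x)$ denotes the number of edges from $\rho_i$ to leaf $x$ in $T_i$, and $p_{\dot{T_i}}(x)$ denotes the probability that $x$ differs in state from the root $\rho$ of $\dot{T_i}$, then the reverse-process argument gives
\begin{equation*}
RA_{\varphi}(\dot{T_i}) \;=\; 1 - \sum_{x \in X_i} \left(\tfrac{1}{2}\right)^{d_{T_i}(x)} p_{\dot{T_i}}(x).
\end{equation*}
This is essentially the same argument as in the proof of Theorem~\ref{coin-toss}(i), with the only modification being that at the out-degree-1 root of $\dot{T_i}$ no coin is tossed, so the weight of each $x \in X_i$ is $(1/2)^{d_{T_i}(x)}$ rather than $(1/2)^{d_{T_i}(x)+1}$. (Alternatively, one can verify this identity directly from Lemma~\ref{lemhelps} combined with Theorem~\ref{coin-toss}(i) applied to $T_i$, using the length-additivity of the $N_r$ model.)

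Next, I would apply Theorem~\ref{coin-toss}(i) to $T$ itself and split the sum over leaves according to whether $x \in X_1$ or $x \in X_2$. For any leaf $x \in X_i$, the path from $\rho$ to $x$ in $T$ is precisely the path from $\rho$ to $x$ in $\dot{T_i}$, so $p_T(x) = p_{\dot{T_i}}(x)$; moreover, $d_T(x) = d_{T_i}(x) + 1$. Substituting gives
\begin{equation*}
RA_{\varphi}(T) \;=\; 1 - \tfrac{1}{2}\sum_{x \in X_1} \left(\tfrac{1}{2}\right)^{d_{T_1}(x)} p_{\dot{T_1}}(x) \;-\; \tfrac{1}{2}\sum_{x \in X_2} \left(\tfrac{1}{2}\right)^{d_{T_2}(x)} p_{\dot{T_2}}(x).
\end{equation*}
Writing the right-hand side as $\tfrac{1}{2}(1-A_1) + \tfrac{1}{2}(1-A_2)$, where $A_i := \sum_{x \in X_i}(1/2)^{d_{T_i}(x)} p_{\dot{T_i}}(x)$, and recognising $1 - A_i = RA_\varphi(\dot{T_i})$ from the first step, yields the desired identity.

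The only genuinely delicate point — and the one I would take most care with — is justifying the closed-form formula for $RA_\varphi(\dot{T_i})$, since the coin-toss method was originally defined only for trees whose root has out-degree 2. Provided that step is handled cleanly (either by re-running the reverse-process argument, or by invoking Lemma~\ref{lemhelps}), the rest of the proof is essentially bookkeeping on distances and substitution probabilities, and no inductive argument is needed.
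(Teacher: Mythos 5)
Your proposal is correct and follows essentially the same route as the paper: apply the closed-form expression of Theorem~\ref{coin-toss}(i) to $T$, split the sum over $X_1$ and $X_2$, shift the exponent by one, and recognise each half as $RA_\varphi(\dot{T_i})$. Your extra care in justifying the stem-edge version of the formula (no coin toss at the out-degree-1 root, or equivalently via Lemma~\ref{lemhelps}) is a point the paper passes over silently, but it is the same argument.
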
     
\begin{proof}
Let $X_1$ and $X_2$ be the leaf sets of the trees $\dot{T}_1$ and $\dot{T}_2$.
For the reconstruction accuracy of the coin-toss method under the $N_r$ model, we then have:
\begin{align*}
RA_{\varphi}(T) &= 1- \sum_{x \in X} \Big( \frac{1}{2} \Big)^{d(x)} p(x) &&\qquad \text{by } \eqref{raphi} \\
&= 1- \sum_{x \in X_1} \Big( \frac{1}{2} \Big)^{d(x)} p(x) - \sum_{x \in X_2} \Big( \frac{1}{2} \Big)^{d(x)} p(x) \\
&= \frac{1}{2} - \sum_{x \in X_1} \Big( \frac{1}{2} \Big)^{d(x)} p(x) + \frac{1}{2} - \sum_{x \in X_2} \Big( \frac{1}{2} \Big)^{d(x)} p(x) \\
&= \frac{1}{2} \big(1 - \sum_{x \in X_1} \Big( \frac{1}{2} \Big)^{d(x)-1} p(x) \big) + \frac{1}{2} \big( 1 - \sum_{x \in X_2} \Big( \frac{1}{2} \Big)^{d(x)-1} p(x) \big) \\
&= \frac{1}{2} RA_{\varphi}(\dot{T_1}) + \frac{1}{2} RA_{\varphi}(\dot{T_2}) &&\qquad \text{by } \eqref{raphi} \\
&= \frac{1}{2} \big( RA_{\varphi}(\dot{T_1}) + RA_{\varphi}(\dot{T_2}) \big),
\end{align*}
which completes the proof.
\hfill$\Box$
\end{proof}

Similar to the fundamental equation for MP for ancestral state reconstruction, we have a fundamental equation for the coin-toss method given in Theorem \ref{fundamentaleqcoin} (this will be prove to be particularly useful in the next section). 
 For this equation, we consider $T'_{0.5}$ as depicted in Fig. \ref{fig2}, which is obtained from $T$ as in Fig. \ref{fig2} in the following way: Again, we delete the leaves $y$ and $z$. We then make the edge leading to $w$ infinitely long by putting an extra edge from $w$ to a new leaf $w^{\star}$ with the substitution probability $\pi=\frac{1}{2}$ on this edge. Setting $\pi=\frac{1}{2}$ simply means that both states are equally likely. Again, in order to keep the tree binary, vertex $w$ is suppressed.

Under the $N_2$ model, we have the following fundamental equation for the coin-toss method given $T, T'_\pi$ and $T'_{0.5}$ as described in Fig.~\ref{fig2} (note that  $T'_{0.5}$ is just $T'_\pi$ with $\pi=0.5$).

\begin{theorem} \label{fundamentaleqcoin}
Let $T$ be a rooted binary phylogenetic tree with leaf set $X$. Then, for the reconstruction accuracy of the coin-toss method under the $N_2$ model, we have:
$$RA_{\varphi}(T) = \theta \cdot RA_{\varphi}(T'_\pi) + (1-\theta) \cdot RA_{\varphi}(T'_{0.5}),$$
where $\theta$ and $\pi$ are as defined as in Theorem~\ref{ra}.
\end{theorem}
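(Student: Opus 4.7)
My plan is to follow the template of the proof of Theorem~\ref{ra}. Fixing the root in state $\alpha$ by symmetry and writing $\mathcal{F}$ for the event $\{\varphi(T,f)=\alpha\}$, the law of total probability conditioned on $\mathcal{E}_1=\{f(y)=f(z)\}$ and $\mathcal{E}_2=\{f(y)\neq f(z)\}$ gives
\begin{equation*}
RA_{\varphi}(T) = \theta\cdot \PP(\mathcal{F}\mid \mathcal{E}_1) + (1-\theta)\cdot \PP(\mathcal{F}\mid \mathcal{E}_2),
\end{equation*}
so the theorem reduces to establishing the two claims $\PP(\mathcal{F}\mid \mathcal{E}_1)=RA_\varphi(T'_\pi)$ and $\PP(\mathcal{F}\mid \mathcal{E}_2)=RA_\varphi(T'_{0.5})$.

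The key observation I would exploit is that the coin-toss procedure on $T$ factors through the state $\psi(w)$ that it assigns to $w$: this state is determined solely by $f(y)$ and $f(z)$, and thereafter the propagation from $w$ up to $\rho$ through the pendant subtrees $t^1,\ldots,t^k$ is identical to the propagation from the leaf $w'$ up to $\rho$ in $T'_\pi$ (or $T'_{0.5}$). Hence it suffices to show that the conditional joint distribution of $(\psi(w),F(w))$ given $\mathcal{E}_i$ in $T$ coincides with the joint distribution of $(F(w'),F(w))$ in $T'_\pi$ (for $i=1$) or $T'_{0.5}$ (for $i=2$). In both cases, the marginal of $F(w)$ is preserved under conditioning because $\PP(\mathcal{E}_i\mid F(w))$ does not depend on $F(w)$ (this is the symmetry already noted in the proof of Theorem~\ref{ra}), so only the conditional distribution of $\psi(w)$ given $F(w)$ needs to be matched.

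For the first claim, conditioning on $\mathcal{E}_1$ makes the coin toss at $w$ redundant and forces $\psi(w)=f(y)=f(z)$; a direct calculation mirroring \eqref{probP2}--\eqref{probP3} yields $\PP(\psi(w)\neq F(w)\mid \mathcal{E}_1,F(w)=\gamma)=p_yp_z/\theta=\pi$, which matches the substitution probability on the edge $(w,w')$ of $T'_\pi$. The second claim is the step that relies essentially on $r=2$: under $\mathcal{E}_2$ the unordered pair $\{f(y),f(z)\}$ equals $\{\alpha,\beta\}$ deterministically, so the fair coin assigns $\psi(w)\sim\mathrm{Uniform}\{\alpha,\beta\}$; and since this distribution does not depend on $F(w)$, it matches the distribution of $F(w')$ in $T'_{0.5}$, where the choice $\pi=1/2$ makes $F(w')$ uniform on $\{\alpha,\beta\}$ independently of $F(w)$. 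The main obstacle I foresee is simply this last independence check for claim (ii), but it reduces to the observation that $\PP(\mathcal{E}_2\mid F(w))$ is independent of $F(w)$, which is immediate from the computation in the proof of Theorem~\ref{ra}.
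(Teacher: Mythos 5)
Your proposal is correct and follows essentially the same route as the paper's proof: condition on $\mathcal{E}_1$ versus $\mathcal{E}_2$, then match the distribution of the state that $\varphi$ assigns to $w$ with the distribution of the leaf state $w'$ in $T'_\pi$ (respectively $w^\star$ in $T'_{0.5}$), noting that the rest of the procedure is identical. Your write-up is in fact somewhat more explicit than the paper's about why the conditioning does not perturb the marginal of $F(w)$, but the underlying argument is the same.
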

\begin{proof}
Let $T$ be a rooted binary phylogenetic tree with root $\rho$, and assume without loss of generality that the root is in state $\alpha$. We define $\mathcal{F}_{\varphi}$ to be the event that $\alpha$ is the state chosen for $\rho$ by the coin-toss method, and, as before,  let ${\mathcal E}_1$ be the probability that  leaves $y$ and $z$ have the same state. By the law of total probability:
\begin{align*}
RA_{\varphi}(T) &= \PP(\mathcal{F}_{\varphi}) = \PP(\mathcal{F}_{\varphi}|\mathcal{E}_1) \PP(\mathcal{E}_1) + \PP(\mathcal{F}_{\varphi}|\mathcal{E}_2) \PP(\mathcal{E}_2) \\
&= \PP(\mathcal{F}_{\varphi}|\mathcal{E}_1) \theta + \PP(\mathcal{F}_{\varphi}|\mathcal{E}_2) (1-\theta).
\end{align*}
In order to prove Theorem \ref{fundamentaleqcoin}, it remains to show $RA_{\varphi}(T'_\pi)=\PP(\mathcal{F}_{\varphi}|\mathcal{E}_1)$ and $RA_{\varphi}(T'_{0.5})=\PP(\mathcal{F}_{\varphi}|\mathcal{E}_2)$ respectively.
Now, $RA\varphi(T'_\pi) = {\mathbb P}({\mathcal F}_{\varphi}|\E_1)$ since, conditional on $\E_1$, the state chosen by $\varphi$ at $w$ in $T$ has the same probability distribution as the state chosen by $\varphi$ at $w'$ in $T'_\pi$, and the remainder of application of $\varphi$ to $T$ and $T'_\pi$ is identical.
\\
We have $RA_{\varphi}(T'_{0.5})=\PP(\mathcal{F}_{\varphi}|\mathcal{E}_2)$, because by having the substitution probability $\pi=\frac{1}{2}$ for the edge leading to $w^{\star}$ both states $\alpha$ and $\beta$ are equally probable. So the probability of choosing $\alpha$ for $w$ is $\frac{1}{2}$. Moreover, on $T$ the probability of choosing $\alpha$ for $w$ from the states at the leaves $y$ and $z$ conditional on event $\mathcal{E}_2$ (i.e. $y$ and $z$ are in different states), is $\frac{1}{2}$ as well. 
\hfill$\Box$
\end{proof}

\section{The relationship between the two ancestral reconstruction methods}
\label{secrel}

\begin{theorem} \label{mpandcoin}
Let $T$ be a rooted binary phylogenetic tree with leaf set $X$. Under the $N_2$ model, the reconstruction accuracy of MP is at least equal to the reconstruction accuracy of the coin-toss method; that is:
\begin{align*} 
RA_{\rm MP}(T) \geq RA_{\varphi}(T).
\end{align*}
\end{theorem}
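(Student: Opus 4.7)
The plan is to argue by induction on $n = |X|$. For the base cases $n = 1$ and $n = 2$, $RA_{\rm MP}(T) = RA_\varphi(T)$ by direct inspection: both methods output the common leaf-state when the leaves agree, and a uniform random choice otherwise.

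For the inductive step $n \geq 3$, I pick a cherry $\{y, z\}$ of $T$ and construct $T'_\pi$, $T''$, and $T'_{0.5}$ as in Fig.~\ref{fig2}. Combining Theorem~\ref{ra} and Theorem~\ref{fundamentaleqcoin}, rearrangement yields
\begin{align*}
RA_{\rm MP}(T) - RA_\varphi(T) = \theta\bigl[RA_{\rm MP}(T'_\pi) - RA_\varphi(T'_\pi)\bigr] + (1-\theta)\bigl[RA_{\rm MP}(T'') - RA_\varphi(T'_{0.5})\bigr].
\end{align*}
Since $T'_\pi$ and $T''$ each have fewer leaves than $T$, the inductive hypothesis gives $RA_{\rm MP}(T'_\pi) \geq RA_\varphi(T'_\pi)$ and $RA_{\rm MP}(T'') \geq RA_\varphi(T'')$, so the problem reduces to establishing the key inequality
\begin{align*}
RA_\varphi(T'') \geq RA_\varphi(T'_{0.5}).
\end{align*}

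The tree $T'_{0.5}$ differs from $T''$ in two respects: it contains an extra leaf $w^*$ with $p(w^*) = 1/2$, attached at the former position of the cherry's parent (so its state is independent of the root), and the leaves of the pendant subtree $t^k$ sit one level deeper than in $T''$ (where the degree-$2$ vertex left behind after the cherry's removal is suppressed). Substituting both trees into the formula of Theorem~\ref{coin-toss}(i), the contributions from $t^1, \ldots, t^{k-1}$ cancel, and after simplification the difference becomes
\begin{align*}
RA_\varphi(T'') - RA_\varphi(T'_{0.5}) = \left(\tfrac{1}{2}\right)^{k+1}\!\left[1 - 2\sum_{x \in X^k}(1/2)^{d_{t^k}(x)} p(x)\right],
\end{align*}
where $X^k$ is the leaf set of $t^k$ and $d_{t^k}(x)$ is the depth of $x$ within $t^k$. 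The standard identity $\sum_{x \in X^k}(1/2)^{d_{t^k}(x)} = 1$ for a rooted binary tree, combined with the bound $p(x) \leq 1/2$ that holds under $N_2$ (since every $p_e \leq 1/2$), gives $\sum_{x \in X^k}(1/2)^{d_{t^k}(x)} p(x) \leq 1/2$, so the bracketed factor is non-negative and the key inequality follows.

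The main obstacle is the careful combinatorial bookkeeping for the leaf depths in $T'_{0.5}$ versus $T''$ that is required to derive the displayed identity. A subtle corner case arises when the cherry's parent is a direct child of $\rho$, so that $T''$ has an out-degree-$1$ root (of the form $\dot{T}$): here Theorem~\ref{coin-toss}(i) must be applied to the subtree below the stem and combined with Lemma~\ref{lemhelps} to absorb the stem edge, but the same argument via the rooted-binary-tree identity and the bound $p(x) \leq 1/2$ still delivers the inequality.
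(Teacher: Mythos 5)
Your proposal is correct and follows essentially the same route as the paper's proof: the same induction combining Theorem~\ref{ra} with Theorem~\ref{fundamentaleqcoin}, the same reduction to the key inequality $RA_\varphi(T'')\geq RA_\varphi(T'_{0.5})$, and the same computation via Theorem~\ref{coin-toss}(i) yielding the factor $\bigl(\tfrac{1}{2}\bigr)^{k+1}\bigl[1-2\sum_x (1/2)^{d(x)}p(x)\bigr]\geq 0$ (your $t^k$ is the paper's $t^1$, just the opposite indexing of the pendant subtrees along the path from $w$ to $\rho$). The only substantive difference is that you explicitly flag and handle the corner case where $T''$ acquires an out-degree-$1$ root, which the paper's proof passes over silently; your treatment of it is sound.
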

\begin{proof}
The proof is by induction on the number of leaves. For $n=2$ and a tree with leaves $x,y$, we have:
\begin{align*}
RA_{\rm MP}(T) &= (1-p_x)(1-p_y) + \frac{1}{2} (p_x(1-p_y)+(1-p_x)p_y) \\
&= \frac{1}{2} \big( (1-p_x) + (1-p_y) \big).
\end{align*}
By Theorem \ref{coin-toss-average}, the reconstruction accuracy of the coin-toss method is exactly the average of the reconstruction accuracy of both subtrees. Therefore, $RA_{\varphi}(T)=\frac{1}{2} \big( (1-p_x) + (1-p_y) \big)$, which is equal to $RA_{\rm MP}(T)$, and establishes the base case of the induction. \\
Now assume that the induction hypothesis holds for all rooted binary phylogenetic trees with fewer than $n$ leaves, where $n \geq 3$.  By Theorem \ref{ra}, we have:
\begin{align*}
RA_{\rm MP}(T) = \theta \cdot RA_{\rm MP}(T'_\pi) + (1-\theta) \cdot RA_{\rm MP}(T''),
\end{align*}
with $T'_\pi$ and $T''$ as in Fig.~\ref{fig2}, and $\theta$ as described above. Additionally, by Theorem \ref{fundamentaleqcoin} we have that
\begin{align*}
RA_{\varphi}(T) = \theta \cdot RA_{\varphi}(T'_\pi) + (1-\theta) \cdot RA_{\varphi}(T'_{0.5}),
\end{align*}
with $T'_\pi$ and $T'_{0.5}$ as in Fig.~\ref{fig2}. By the induction hypothesis, $RA_{\rm MP}(T'_\pi) \geq RA_{\varphi}(T'_\pi)$ and $RA_{\rm MP}(T'') \geq RA_{\varphi}(T'')$ both hold, so in order to complete the proof, it remains to show that $RA_{\varphi}(T'') \geq RA_{\varphi}(T'_{0.5})$.  The intuition behind this inequality is that when the leaf at the end of a pendant edge is completely random (i.e. no more likely to match the root state than not match it) then pruning this edge cannot reduce the reconstruction  accuracy of $\varphi$. 
Note that $T''$ has one leaf fewer than $T'_{0.5}$. In the following we consider both trees as shown in Fig.~\ref{fig2}. As before, all vertices of degree 2 are suppressed to keep the tree binary. In order to calculate $RA_{\varphi}(T'')$ and $RA_{\varphi}(T'_{0.5})$, consider  the subtrees $t^{1}, \dots, t^{k}$ of $T$ and their corresponding leaf sets $X^{1}, \dots, X^{k}$, that are adjacent to the path from $w$ up to $\rho$. These leaf sets partition the leaf set of $T''$, and if we add in the additional set $\{w^{\star}\}$, then this collection of $k+1$ sets partitions the leaves of $T'_\pi$ and $T'_{0.5}$.
By Theorem~\ref{coin-toss}(i) we have:
\begin{align*}
RA_{\varphi}(T'') &= 1- \sum_{x \in X^{1} \cup \dots \cup X^{k}} \Big( \frac{1}{2} \Big)^{d''(x)} p(x),
\end{align*}
where $d''(x)$ is the number of edges between the root and a leaf $x$ in $T''$, and:
\begin{align*}
RA_{\varphi}(T'_{0.5}) &= 1- \sum_{x \in X^{1} \cup \dots \cup X^{k} \cup \{w^{\star}\}} \Big( \frac{1}{2} \Big)^{d'(x)} p(x),
\end{align*}
where $d'(x)$ is the number of edges between the root and leaf $x$ in $T'_{0.5}$. Moreover, note that for $i=2, \dots,k$:
\begin{align*}
\sum_{x \in X^{i}} \Big( \frac{1}{2} \Big)^{d''(x)} p(x) = \sum_{x \in X^{i}} \Big( \frac{1}{2} \Big)^{d'(x)} p(x).
\end{align*} 
Thus, $RA_{\varphi}(T'') - RA_{\varphi}(T'_{0.5})$ becomes:
\begin{align*}
RA_{\varphi}(T'') - RA_{\varphi}(T'_{0.5}) &= 1- \sum_{x \in X^{1}} \Big( \frac{1}{2} \Big)^{d''(x)} p(x) - \big(1-\sum_{x \in X^{1}} \Big( \frac{1}{2} \Big)^{d'(x)} p(x) - \Big(\frac{1}{2}\Big)^{k}p({w^{\star}}) \big) \\
&= \sum_{x \in X^{1}} \Big( \frac{1}{2} \Big)^{d'(x)} p(x) - \sum_{x \in X^{1}} \Big( \frac{1}{2} \Big)^{d''(x)} p(x) + \Big(\frac{1}{2}\Big)^{k}p({w^{\star}}).
\end{align*}
We have $\pi=\frac{1}{2}$, which gives us $p({w^{\star}})=p(w)+\pi-2p(w)\frac{1}{2}=p(w)+\frac{1}{2}-p(w)=\frac{1}{2}$. Again, note that the vertex $w$ is suppressed in $T'_{0.5}$ in order to keep the tree binary, and thus $k$ edges separate the root and the leaf $w^{\star}$. Similarly, the vertex leading to subtree $t^{1}$ is suppressed in $T''$ to keep the tree binary. This gives us that $k-1$ edges separate the root of $T''$ and the root of $t^{1}$, whereas $k$ edges separate the root of $T'_{0.5}$ and the root of the subtree $t^{1}$. Let $d_1(x)$ denote the number of edges between the root of subtree $t^{1}$ and leaf  $x$ in $T$,  then we have
$$d'(x)=k+d_1(x) \mbox{ and } d''(x) = k-1 + d_1(x).$$  If we now rearrange the above expression for  $RA_{\varphi}(T'') - RA_{\varphi}(T'_{0.5})$, noting that $p(w^*)=\frac{1}{2}$ we obtain:
\begin{align*}
RA_{\varphi}(T'') - RA_{\varphi}(T'_{0.5}) &= \sum_{x \in X^{1}} \Big( \frac{1}{2} \Big)^{k+d_1(x)} p(x) - \sum_{x \in X^{1}} \Big( \frac{1}{2} \Big)^{k-1+d_1(x)} p(x) + \Big(\frac{1}{2}\Big)^{k+1}, \\
&= \Big( \frac{1}{2} \Big)^{k} \sum_{x \in X^{1}} \Big( \frac{1}{2} \Big)^{d_1(x)} p(x) - \Big( \frac{1}{2} \Big)^{k-1} \sum_{x \in X^{1}} \Big( \frac{1}{2} \Big)^{d_1(x)} p(x) + \Big(\frac{1}{2}\Big)^{k+1} \\
&= - \Big( \frac{1}{2} \Big)^{k} \sum_{x \in X^{1}} \Big( \frac{1}{2} \Big)^{d_1(x)} p(x) + \Big(\frac{1}{2}\Big)^{k+1} \\
&=\Big(\frac{1}{2}\Big)^{k+1} \Big( 1-2 \sum_{x \in X^{1}} \Big( \frac{1}{2} \Big)^{d_1(x)} p(x) \Big) \\
&\geq 0,
\end{align*}
since for all $x \in X$, we have $d_1(x) \geq 0$ and $0 \leq p(x) \leq \frac{1}{2}$ in the $N_2$ model. Therefore, we have $RA_{\varphi}(T'') \geq RA_{\varphi}(T'_{0.5})$, which, together with the induction hypothesis, gives $RA_{\rm MP}(T'') \geq RA_{\varphi}(T'') \geq RA_{\varphi}(T'_{0.5})$ and thus completes the proof.
\hfill$\Box$
\end{proof}
Note that combining the statement of Theorem \ref{mpandcoin} with Theorem~\ref{coin-toss} gives us an alternative proof of Corollary~\ref{1-p}, since  $RA_{\rm MP}(T) \geq RA_{\varphi}(T) \geq 1-p_{\rm max}$ (i.e. under the $N_2$ model, the Fitch algorithm using all terminal taxa is at least as accurate for ancestral state reconstruction as selecting the state of a taxon $x$ that maximises $p(x)$).

\section{Further results for the $r$--state setting}
\label{rst}

In this section, we will indicate the set of states in $\AL$ by writing $\AL=\{\alpha_1, \alpha_2, \ldots, \alpha_r\}$, and, unless stated otherwise, we assume the root is in state $\alpha_1$.
For a set $\mathcal{R} \subseteq \AL$, $\alpha_1 \in \mathcal{R}, |\mathcal{R} |=k$,  let 
$$P_k(T) \coloneqq \PP(\FS(f,T)=\mathcal{R} | F(\rho)=\alpha_1).$$ 
Similarly,  for a set $\mathcal{R} \subseteq \AL$, $\alpha_1 \notin \mathcal{R}, |\mathcal{R} |=k$, let 
$$Q_k(T) \coloneqq \PP(\FS(f,T)=\mathcal{R} | F(\rho)=\alpha_1).$$ 
By the symmetry in the model, the values $P_k(T)$ and $Q_k(T)$ are independent of the choice of $\mathcal{R}$, subject to the constraints imposed on $\mathcal{R}$ in their definition.

\begin{lemma} \label{rstatesRA}
For any rooted binary phylogenetic tree $T$ under the $N_r$ model, the reconstruction accuracy of MP is given by:
\begin{align*}
RA_{\rm MP}(T)= \frac{1}{r} \Big(1 + \sum_{k=1}^{r-1} {r-1 \choose k} (P_k(T) - Q_k(T)) \Big).
\end{align*}

\end{lemma}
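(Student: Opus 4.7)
The plan is to expand the definition of $RA_{\rm MP}(T)$ in Eqn.~\eqref{RAdef} by grouping subsets $\mathcal{R}$ according to their size. Since, by symmetry, $\PP(\FS(f,T)=\mathcal{R}\mid F(\rho)=\alpha_1)$ depends only on whether $\alpha_1\in \mathcal{R}$ and on $k=|\mathcal{R}|$, and since there are $\binom{r-1}{k-1}$ subsets $\mathcal{R}\subseteq\AL$ of size $k$ that contain $\alpha_1$, one immediately obtains
$$RA_{\rm MP}(T) = \sum_{k=1}^{r} \binom{r-1}{k-1} \frac{P_k(T)}{k}.$$
Using the elementary identity $\frac{1}{k}\binom{r-1}{k-1}=\frac{1}{r}\binom{r}{k}$, this rewrites as
$$RA_{\rm MP}(T) = \frac{1}{r}\sum_{k=1}^{r}\binom{r}{k} P_k(T).$$

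Next, I would apply Pascal's identity $\binom{r}{k}=\binom{r-1}{k-1}+\binom{r-1}{k}$ for $1\leq k\leq r-1$ (with $\binom{r}{r}=\binom{r-1}{r-1}=1$) to split the sum:
$$\frac{1}{r}\sum_{k=1}^{r}\binom{r}{k}P_k(T) = \frac{1}{r}\sum_{k=1}^{r}\binom{r-1}{k-1}P_k(T) + \frac{1}{r}\sum_{k=1}^{r-1}\binom{r-1}{k}P_k(T).$$

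The remaining step is to eliminate the first sum on the right using the fact that all conditional probabilities over admissible Fitch sets (conditional on $F(\rho)=\alpha_1$) sum to $1$. Grouping those sums again by size and by whether $\alpha_1$ is present or absent gives
$$\sum_{k=1}^{r}\binom{r-1}{k-1}P_k(T) + \sum_{k=1}^{r-1}\binom{r-1}{k}Q_k(T) = 1,$$
since $Q_k(T)$ is defined only for $k\leq r-1$ (the set $\mathcal{R}=\AL$ necessarily contains $\alpha_1$). Substituting this in place of $\sum_{k=1}^{r}\binom{r-1}{k-1}P_k(T)$ and combining with the remaining $P_k$ sum yields exactly
$$RA_{\rm MP}(T)=\frac{1}{r}\Bigl(1+\sum_{k=1}^{r-1}\binom{r-1}{k}\bigl(P_k(T)-Q_k(T)\bigr)\Bigr).$$

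There is no real obstacle here beyond careful bookkeeping of binomial coefficients and sum indices. The only subtlety is the boundary term at $k=r$, which must be handled separately (the Fitch set of size $r$ automatically contains $\alpha_1$, so the $Q_r$ term is absent), and the use of the two identities $\frac{1}{k}\binom{r-1}{k-1}=\frac{1}{r}\binom{r}{k}$ and Pascal's rule, which together convert the $\frac{1}{k}$ weight in the definition of $RA_{\rm MP}$ into the $\binom{r-1}{k}(P_k-Q_k)$ form of the claimed identity.
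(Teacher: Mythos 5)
Your proof is correct and follows essentially the same route as the paper: both expand Eqn.~(\ref{RAdef}) by grouping Fitch sets by size via the model's symmetry, and both invoke the normalization $\sum_{k=1}^{r}\binom{r-1}{k-1}P_k(T)+\sum_{k=1}^{r-1}\binom{r-1}{k}Q_k(T)=1$ to introduce the $Q_k$ terms. The only difference is cosmetic bookkeeping — you route the binomial algebra through $\frac{1}{k}\binom{r-1}{k-1}=\frac{1}{r}\binom{r}{k}$ and Pascal's rule, whereas the paper isolates the $k=r$ term and uses $\frac{r-k}{k}\binom{r-1}{k-1}=\binom{r-1}{k}$.
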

\begin{proof}
Let $T$ be a rooted binary phylogenetic tree and let $P_k(T)$ and $Q_k(T)$ be as defined above (so we assume the root to be in state $\alpha_1$). For the reconstruction accuracy of MP under the $N_r$ model, Eqn. \eqref{RAdef}  and the law of total probability gives:
\begin{align*}
RA_{\rm MP}(T) &= \sum_{\substack{\mathcal{R}: \mathcal{R} \subseteq \AL \\ \text{and } \alpha_1 \in \mathcal{R}}} \frac{1}{| \mathcal{R} |} \cdot \PP(\FS(f,T)=\mathcal{R} | F(\rho)=\alpha_1) \\
&= \sum_{k=1}^{r} \frac{1}{k} ~P_k(T)~ {r-1 \choose k-1} = \sum_{k=1}^{r-1} \frac{1}{k} ~P_k(T)~ {r-1 \choose k-1} + \frac{1}{r} ~p_r \\
&= \sum_{k=1}^{r-1} \frac{1}{k} ~P_k(T)~ {r-1 \choose k-1} + \frac{1}{r} \big(1-\sum_{k=1}^{r-1} Q_k(T)~ {r-1 \choose k} -\sum_{k=1}^{r-1} P_k(T)~ {r-1 \choose k-1} \big). \\
\end{align*}
Rearranging this last expression gives:
\begin{align*}
RA_{\rm MP}(T) & = \frac{1}{r} + \sum_{k=1}^{r-1} \big(\frac{1}{k} - \frac{1}{r} \big) ~P_k(T)~ {r-1 \choose k-1} - \frac{1}{r} \sum_{k=1}^{r-1} Q_k(T)~ {r-1 \choose k} \\
&= \frac{1}{r} + \frac{1}{r} \sum_{k=1}^{r-1} \frac{r-k}{k} ~P_k(T)~ {r-1 \choose k-1} - Q_k(T)~ {r-1 \choose k} \\
&= \frac{1}{r} + \frac{1}{r} \sum_{k=1}^{r-1} {r-1 \choose k} (P_k(T)- Q_k(T)) \\
&=\frac{1}{r} \Big(1 + \sum_{k=1}^{r-1} {r-1 \choose k} (P_k(T) - Q_k(T)) \Big).
\end{align*}
\hfill$\Box$
\end{proof}

For the following lemma we consider $\dot{T}$ obtained from $T$ by adding an additional stem edge $(\rho,\rho')$ and substitution probability $p_{\rho}$ on this edge.
Let 
\begin{align*}
P_{\alpha_1, \dots, \alpha_k}(\dot{T}) &\coloneqq \PP(\FS(f,\dot{T})=\{\alpha_1\dots, \alpha_k\} |F(\rho)= \alpha_1) \mbox{ and } \\
P_{\alpha_2, \dots, \alpha_{k+1}}(\dot{T}) &\coloneqq \PP(\FS(f,\dot{T})=\{\alpha_2\dots, \alpha_{k+1}\} |F(\rho)= \alpha_1)
\end{align*}

\begin{lemma} \label{rstatesprob}
Assume that $\rho$ is in state $\alpha_1$. Under the $N_r$ model and $1 \leq k \leq r-1$, we have:
\begin{align*}
P_{\alpha_1, \dots, \alpha_k}(\dot{T})  =   (1- \frac{r-k}{r-1}p_{\rho})~ P_k(T) + \frac{r-k}{r-1}~ p_\rho Q_k(T), \mbox{ and }\\
P_{\alpha_2, \dots, \alpha_{k+1}}(\dot{T})  =  (1- \frac{k}{r-1}p_{\rho})~ Q_k(T) + \frac{k}{r-1} p_\rho P_k(T),
\end{align*}
where $P_k(T)$ and $Q_k(T)$ are as defined above.
\end{lemma}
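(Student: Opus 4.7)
The plan is to condition on the state at the intermediate vertex $\rho'$ (the unique child of the root $\rho$ in $\dot T$) and to exploit the symmetry of the $N_r$ substitution model. The central observation is that, because $\rho$ has out-degree one in $\dot T$, applying the Fitch pass through the stem edge yields $\FS(f,\dot T) = \FS(f,T)$ as a deterministic function of the leaf character $f$. Consequently the only difference between conditioning on $F(\rho) = \alpha_1$ in $\dot T$ and conditioning on $F(\rho') = \alpha_1$ in $T$ is the distribution that is induced on $F(\rho')$.

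First I would apply the law of total probability, conditioning on the state at $\rho'$:
\[
\PP(\FS(f,\dot T) = \mathcal{R} \mid F(\rho) = \alpha_1) = \sum_{j=1}^{r} \PP(F(\rho') = \alpha_j \mid F(\rho) = \alpha_1)\cdot \PP(\FS(f,T) = \mathcal{R} \mid F(\rho') = \alpha_j).
\]
Under $N_r$ we have $\PP(F(\rho') = \alpha_1 \mid F(\rho) = \alpha_1) = 1 - p_\rho$ and $\PP(F(\rho') = \alpha_j \mid F(\rho) = \alpha_1) = p_\rho/(r-1)$ for each $j \neq 1$. Moreover, by the symmetry of $N_r$, the conditional probability $\PP(\FS(f,T) = \mathcal{R} \mid F(\rho') = \alpha_j)$ depends only on whether $\alpha_j \in \mathcal{R}$: it equals $P_k(T)$ in that case and $Q_k(T)$ otherwise, where $k = |\mathcal{R}|$.

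The two identities then follow by counting. For $\mathcal{R} = \{\alpha_1, \ldots, \alpha_k\}$ there are $k$ indices $j$ with $\alpha_j \in \mathcal{R}$ (namely $j = 1$ contributing weight $1 - p_\rho$, together with $k-1$ further indices each contributing weight $p_\rho/(r-1)$) and $r-k$ indices with $\alpha_j \notin \mathcal{R}$ (each contributing weight $p_\rho/(r-1)$); summing and simplifying the coefficient of $P_k(T)$ to $1 - \tfrac{r-k}{r-1}p_\rho$ yields the first identity. For $\mathcal{R} = \{\alpha_2, \ldots, \alpha_{k+1}\}$ the roles are reversed: the $k$ indices $j \in \{2, \ldots, k+1\}$ with $\alpha_j \in \mathcal{R}$ each carry weight $p_\rho/(r-1)$, while among the $r-k$ indices with $\alpha_j \notin \mathcal{R}$, the index $j = 1$ carries weight $1 - p_\rho$ and the remaining $r-k-1$ each carry weight $p_\rho/(r-1)$, giving the second identity after simplification.

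The argument is essentially careful bookkeeping, so there is no substantive obstacle. The only point requiring care is, for each specific target set $\mathcal{R}$, correctly counting how many of the $r$ possible states at $\rho'$ lie inside $\mathcal{R}$ (each contributing a $P_k(T)$ term) versus outside $\mathcal{R}$ (each contributing a $Q_k(T)$ term), and then algebraically collapsing the resulting $r$-term sum into the two-term expression appearing in the lemma.
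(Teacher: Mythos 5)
Your proposal is correct and follows essentially the same route as the paper: condition on the state at $\rho'$, invoke the permutation symmetry of $N_r$ to reduce each conditional probability $\PP(\FS(f,T)=\mathcal{R}\mid F(\rho')=\alpha_j)$ to $P_k(T)$ or $Q_k(T)$ according to whether $\alpha_j\in\mathcal{R}$, and then count the weights $1-p_\rho$ and $p_\rho/(r-1)$. The paper organises the same bookkeeping by splitting the sum over $j\neq 1$ into the indices inside and outside $\mathcal{R}$ (its sums $S_1$ and $S_2$), but the argument and the resulting coefficient arithmetic are identical to yours.
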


\begin{proof}
For $1 \leq k \leq r-1$, we can write $P_{\alpha_1, \dots, \alpha_k}(\dot{T})$ as follows:
\begin{equation}
\label{ry1}
(1-p_\rho) \PP(\FS(f, T)=\{\alpha_1,\dots, \alpha_k\} |F(\rho') = \alpha_1, F(\rho)= \alpha_1) + \frac{p_{\rho}}{r-1} S,
\end{equation}
where 
$$S= \sum_{i=2}^r \PP(\FS(f,T)=\{\alpha_1,\dots, \alpha_k\} |F(\rho')=\alpha_i, F(\rho)= \alpha_1).$$
We can now split $S$ into two sums depending on the range of $k$. Thus we have $S=S_1 + S_2$, where:
$$S_1= \sum_{i=2}^k \PP(\FS(f,T)=\{\alpha_1,\dots, \alpha_k\} |F(\rho')=\alpha_i, F(\rho)= \alpha_1), \mbox{ and }$$
$$S_2=\sum_{i=k+1}^r \PP(\FS(f,T)=\{\alpha_1,\dots, \alpha_k\} |F(\rho')=\alpha_i, F(\rho)= \alpha_1).$$
Notice also that, by the symmetry of the model,  each of the $k-1$ terms  in $S_1$  is equal to $$\PP(\FS(f,T)=\{\alpha_1,\dots, \alpha_k\} |F(\rho')=\alpha_1, F(\rho)= \alpha_1),$$
which is $P_k(T)$. Thus $S_1 = (k-1)P_k(T)$.

Similarly, each of the $r-k$ terms in $S_2$ is equal to 
$$\PP(\FS(f,T)=\{\alpha_2,\dots, \alpha_{k+1}\} |F(\rho')=\alpha_1, F(\rho)= \alpha_1),$$
which is just $Q_k(T)$, and thus $S_2 = (r-k)Q_k(T)$.  Thus, from the expression for $P_{\alpha_1, \dots, \alpha_k}(\dot{T})$ given by (\ref{ry1}), we have:
$$P_{\alpha_1, \dots, \alpha_k}(\dot{T}) = (1-p_\rho)P_k(T) + \frac{p_{\rho}}{r-1} ((k-1)P_k(T) + (r-k)Q_k(T)).$$
Rearranging the term on the right gives the expression for $P_{\alpha_1, \dots, \alpha_k}(\dot{T})$ in Lemma~\ref{rstatesprob}.


The second part of Lemma~\ref{rstatesprob} follows by an analogous argument.  For $1 \leq k \leq r-1$, we can write
$P_{\alpha_2, \dots, \alpha_{k+1}}(\dot{T})$ as follows:
\begin{equation}
\label{ry2}
(1-p_\rho) \PP(\FS(f,T)=\{\alpha_2,\dots, \alpha_{k+1}\} | F(\rho')=\alpha_1, F(\rho)= \alpha_1)  + \frac{p_{\rho}}{r-1} S',
\end{equation}
where 
$$S'= \sum_{i=2}^r \PP(\FS(f,T)=\{\alpha_2,\dots, \alpha_{k+1}\} |F(\rho')=\alpha_i, F(\rho)= \alpha_1).$$
Write $S'=S'_1 + S'_2$ where:
$$S'_1= \sum_{i=2}^k \PP(\FS(f,T)=\{\alpha_2,\dots, \alpha_{k+1}\} |F(\rho')=\alpha_i, F(\rho)= \alpha_1), \mbox{ and }$$
$$S'_2=\sum_{i=k+1}^r \PP(\FS(f,T)=\{\alpha_2,\dots, \alpha_{k+1}\}  |F(\rho')=\alpha_i, F(\rho)= \alpha_1).$$
Notice also that, by the symmetry of the model,  each of the $k$ terms  in $S'_1$  is equal to $$\PP(\FS(f,T)=\{\alpha_1,\dots, \alpha_k\} |F(\rho')=\alpha_1, F(\rho)= \alpha_1),$$
which is $P_k(T)$. Thus $S'_1 = kP_k(T)$.

Similarly, each of the $r-k-1$ terms in $S'_2$ is equal to 
$$\PP(\FS(f,T)=\{\alpha_2,\dots, \alpha_{k+1}\} |F(\rho')=\alpha_1, F(\rho)= \alpha_1),$$
which is just $Q_k(T)$, and thus $S'_2 = (r-k-1)Q_k(T)$.  Thus, from the expression for
 $P_{\alpha_2, \dots, \alpha_{k+1}}(\dot{T})$ given by (\ref{ry2}) we have:
$$P_{\alpha_2, \dots, \alpha_{k+1}}(\dot{T}) = (1-p_\rho)Q_k(T) + \frac{p_{\rho}}{r-1} (kP_k(T) + (r-k-1)Q_k(T)).$$
Rearranging the term on the right gives the expression for $P_{\alpha_2, \dots, \alpha_{k+1}}(\dot{T})$ in Lemma~\ref{rstatesprob}.
\hfill$\Box$
\end{proof}

By the proof of Lemma \ref{rstatesprob}, we have the following corollary.
\begin{corollary} \label{corrstates}
Let $\dot{T}$ be a rooted binary phylogenetic tree with  stem edge $(\rho,\rho')$.
Consider the $N_r$ model with state space $\AL=\{\alpha_1, \dots, \alpha_r\}$, assume the root $\rho$ is in state $\alpha_1$, and let $p_{\rho}$ be the substitution probability on the stem edge.
Then, for $1 \leq k \leq r-1$ we have:
\begin{align*}
&(i)~ P_{\alpha_1, \dots, \alpha_k}(\dot{T}) - P_{\alpha_2, \dots, \alpha_{k+1}}(\dot{T}) = (1-\frac{r}{r-1}~p_{\rho}) (P_k(T) - Q_k(T)) \\
&(ii)~ P_{\alpha_1, \dots, \alpha_k}(\dot{T}) = P_{\alpha_2, \dots, \alpha_{k+1}}(\dot{T}) + (1-\frac{r}{r-1}~p_{\rho}) (P_k(T) - Q_k(T)) \\
&(iii)~ \text{If } P_k(T) \geq Q_k(T), \text{ then } P_{\alpha_1, \dots, \alpha_k}(\dot{T}) \geq P_{\alpha_2, \dots, \alpha_{k+1}}(\dot{T}).
\end{align*}
\end{corollary}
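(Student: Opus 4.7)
The plan is to derive all three parts directly from the two identities furnished by Lemma~\ref{rstatesprob}, which already express $P_{\alpha_1, \dots, \alpha_k}(\dot{T})$ and $P_{\alpha_2, \dots, \alpha_{k+1}}(\dot{T})$ as explicit linear combinations of $P_k(T)$ and $Q_k(T)$ with coefficients depending on $p_\rho$, $r$ and $k$. No new probabilistic argument is required; this is a bookkeeping corollary.

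For Part (i), I would simply subtract the second identity of Lemma~\ref{rstatesprob} from the first. The coefficient of $P_k(T)$ in the difference becomes $(1 - \tfrac{r-k}{r-1}p_\rho) - \tfrac{k}{r-1}p_\rho = 1 - \tfrac{r}{r-1}p_\rho$, and the coefficient of $Q_k(T)$ becomes $\tfrac{r-k}{r-1}p_\rho - (1-\tfrac{k}{r-1}p_\rho) = -(1-\tfrac{r}{r-1}p_\rho)$. Factoring out $(1-\tfrac{r}{r-1}p_\rho)$ yields exactly the claimed identity.

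Part (ii) is then nothing more than transposing a term in the identity of Part (i), so it needs only to be stated.

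For Part (iii), the key observation is that the factor $1-\tfrac{r}{r-1}p_\rho$ appearing in Part (i) is non-negative, because the standing assumption in the paper (following from a continuous-time Markov interpretation) is that every substitution probability satisfies $p_e \leq (r-1)/r$, and in particular $p_\rho \leq (r-1)/r$. Consequently, if $P_k(T) \geq Q_k(T)$, the right-hand side of Part (i) is non-negative, which gives $P_{\alpha_1, \dots, \alpha_k}(\dot{T}) \geq P_{\alpha_2, \dots, \alpha_{k+1}}(\dot{T})$, as required. There is no real obstacle here; the only place where care is needed is to invoke the bound on $p_\rho$ to ensure the sign of the factor $1-\tfrac{r}{r-1}p_\rho$.
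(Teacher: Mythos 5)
Your proposal is correct and matches the paper's intent exactly: the paper derives this corollary directly from Lemma~\ref{rstatesprob} (it states only ``By the proof of Lemma~\ref{rstatesprob}, we have the following corollary''), and your subtraction of the two identities, with the coefficient computations and the appeal to $p_\rho \leq (r-1)/r$ for the sign of $1-\frac{r}{r-1}p_\rho$ in Part~(iii), is precisely the bookkeeping the authors leave implicit.
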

Notice also, that if the substitution probability on every edge is strictly less than $\frac{r-1}{r}$ (as required by an underlying continuous-time Markov realisation of the process), then the following strict inequality result holds:  If $P_k(T) > Q_k(T)$, then $P_{\alpha_1, \dots, \alpha_k}(\dot{T}) > P_{\alpha_2, \dots, \alpha_{k+1}}(\dot{T})$.
\\ \\
In Theorem \ref{thpkqk} we consider a rooted binary tree $T$ as depicted in Fig. \ref{fig3}. 
\begin{theorem} \label{thpkqk}
Let $T$ be a rooted binary phylogenetic tree under the $N_r$ model. For $1 \leq k \leq r-1$ we have:
$P_k(T) \geq Q_k(T).$
\end{theorem}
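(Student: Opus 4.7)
The plan is to proceed by induction on the number of leaves $n$ of $T$, exploiting the recursive structure of the Fitch algorithm together with Corollary~\ref{corrstates}.

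For the base case I take $n=1$, a trivial tree consisting of a single leaf/root: then $\FS=\{\alpha_1\}$ with probability $1$ conditional on $F(\rho)=\alpha_1$, so $P_1=1 \geq Q_1=0$ and $P_k=Q_k=0$ for $k \geq 2$. (A direct computation on a two-leaf cherry gives an alternative explicit base case if one prefers $n \geq 2$.)

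For the inductive step, I would decompose $T$ at its root $\rho$ into the two maximal pendant subtrees $T_1, T_2$ with root-edge substitution probabilities $p_1, p_2$, as in Fig.~\ref{fig3}. By the induction hypothesis, $P_j(T_i) \geq Q_j(T_i)$ for $i=1,2$ and each $1 \leq j \leq r-1$. Applying Corollary~\ref{corrstates}(iii) to each $\dot{T}_i$ then gives a monotonicity property for the Fitch set $\FS_i := \FS(T_i)$ at $\rho_i$, conditional on $F(\rho)=\alpha_1$: for any two subsets $S, S' \subseteq \AL$ with $|S|=|S'|$, $\alpha_1 \in S$, and $\alpha_1 \notin S'$,
\[
\PP(\FS_i = S \mid F(\rho)=\alpha_1) \geq \PP(\FS_i = S' \mid F(\rho)=\alpha_1),
\]
since the symmetry of the $N_r$ model under any permutation of $\AL$ that fixes $\alpha_1$ extends the two representative sets in the Corollary to all such pairs.

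The heart of the argument is then a bijection on joint Fitch-set configurations. Let $\sigma$ be the involution on $\AL$ that swaps $\alpha_1$ with $\alpha_2$ and fixes the remaining states, and fix $\mathcal{R}_P = \{\alpha_1,\alpha_3,\ldots,\alpha_{k+1}\}$ and $\mathcal{R}_Q = \sigma(\mathcal{R}_P)$. Since $\sigma$ commutes with both $\cap$ and $\cup$, the map $(A, B) \mapsto (\sigma(A), \sigma(B))$ is a bijection between pairs of subtree Fitch sets producing $\FS(T)=\mathcal{R}_P$ and pairs producing $\FS(T)=\mathcal{R}_Q$, preserving the intersection-versus-union clause of the Fitch rule. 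This yields
\[
P_k(T) - Q_k(T) = \sum_{(A,B)\,\to\,\mathcal{R}_P} \bigl[\PP(\FS_1=A)\PP(\FS_2=B) - \PP(\FS_1=\sigma(A))\PP(\FS_2=\sigma(B))\bigr],
\]
with all probabilities conditioned on $F(\rho)=\alpha_1$.

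The concluding step, which I expect to be the main technical obstacle, is verifying term-wise non-negativity by a finite case split on the $\sigma$-behaviour of $A$ and $B$: each set is either fixed by $\sigma$ (contains both or neither of $\alpha_1,\alpha_2$) or moved by $\sigma$ (contains exactly one of them). The intersection case forces $A, B \supseteq \mathcal{R}_P$, so both contain $\alpha_1$; the union case forces $\alpha_2 \notin A \cup B$ and places $\alpha_1$ in exactly one of the two sets. In each surviving sub-case the difference of products factors as a non-negative quantity times one (or the product of two) of the monotonicity inequalities from the previous paragraph, and is therefore non-negative. This closes the induction.
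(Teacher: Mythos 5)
Your proposal is correct and follows essentially the same route as the paper: induction on the number of leaves, decomposition at the root into the stemmed subtrees $\dot{T_1},\dot{T_2}$, lifting the induction hypothesis through the stem edge via Corollary~\ref{corrstates}(iii), and a term-wise comparison of the intersection- and union-clause sums in the Fitch recursion via a size-preserving correspondence of pairs of subtree Fitch sets. The only cosmetic difference is that you realise the correspondence as the involution swapping $\alpha_1\leftrightarrow\alpha_2$ applied to the representative set $\{\alpha_1,\alpha_3,\dots,\alpha_{k+1}\}$, whereas the paper replaces $\alpha_1$ by $\alpha_{k+1}$ in $\{\alpha_1,\dots,\alpha_k\}$; your involution has the minor advantage of making the bijectivity of the correspondence immediate.
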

\begin{proof} 
Since the root is assumed to be in state $\alpha_1$  and by the definition of $P_k(T)$ and $Q_k(T)$ we have that
$$ P_k(T)  =  P_{\alpha_1, \dots, \alpha_k}(T)  \mbox{ and }  Q_k(T) =  P_{\alpha_2, \dots, \alpha_{k+1}}(T).
$$
The proof is by induction on the number of leaves $n$. The inequality holds trivially for $n=1$; for  $n=2$, let $p_x, p_y$ denote the substitution probabilities on the two edges of the tree. We then have:
\begin{align*}
&P_{\alpha_1}(T) = (1-p_x)(1-p_y); &  P_{\alpha_2}(T) = \frac{p_x}{r-1}~\frac{p_y}{r-1}, \\
&P_{\alpha_1\alpha_2}(T) = (1-p_x)~\frac{p_y}{r-1}  + \frac{p_x}{r-1} ~(1-p_y); &  P_{\alpha_2\alpha_3}(T) =2~\frac{p_x}{r-1}~\frac{p_y}{r-1}.
\end{align*}
Moreover, we have:
\begin{align*}
P_{\alpha_1}(T) - P_{\alpha_2}(T) &=  (1-p_x)(1-p_y) - \frac{p_x}{r-1}~\frac{p_y}{r-1}  = 1-p_x-p_y + p_xp_y - \frac{p_x}{r-1}~\frac{p_y}{r-1} \\
&=(1- \frac{r}{r-1}~p_x)(1-\frac{r}{r-1}~p_y) +  \frac{p_x}{r-1} (1-\frac{r}{r-1}~p_y)  + \frac{p_y}{r-1} (1-\frac{r}{r-1}~p_x),
\end{align*} and
\begin{align*}
P_{\alpha_1\alpha_2}(T) - P_{\alpha_2\alpha_3}(T) &= (1-p_x)~\frac{p_y}{r-1}  + \frac{p_x}{r-1} ~(1-p_y) - 2~\frac{p_x}{r-1}~\frac{p_y}{r-1} \\
&=\frac{p_x}{r-1} (1-\frac{2p_y}{r-1}) + \frac{p_y}{r-1} (1-\frac{2p_x}{r-1}),
\end{align*} which are both non-negative, since $p_x,p_y \leq \frac{r-1}{r}$. This gives the base case of the induction. 
We now assume that the induction hypothesis holds for all trees with fewer than $n$ leaves and show that it also holds for a tree $T$ with $n$ leaves. Consider the decomposition of $T$ into its two maximal pending subtrees $T_1$ and $T_2$ and the associated  
trees $\dot{T_1}$ and $\dot{T_2}$ with a stem edge (as in Fig.~\ref{fig3}).
By the induction hypothesis,  $P_{\alpha_1, \dots, \alpha_k}(T_i) \geq  P_{\alpha_2, \dots, \alpha_{k+1}}(T_i)$ holds for $i \in \{1,2\}$. By combining  this with Corollary \ref{corrstates} (iii), we obtain: 
\begin{align}
P_{\alpha_1, \dots, \alpha_k}(\dot{T_i}) \geq  P_{\alpha_2, \dots, \alpha_{k+1}}(\dot{T_i}) \label{arg2}
\end{align}
for $i \in \{1,2\}$. Moreover, $P_k(T)$ and $Q_k(T)$ are given as follows.  Let $\omega_\alpha: =\{\alpha_1, \ldots, \alpha_k\}$ and
$\omega_\beta: =\{\alpha_2, \ldots, \alpha_{k+1}\}$, and in the following equations, $\omega_1$ and $\omega_2$ vary over
all the nonempty subsets of $\AL$ that satisfy the stated constraints under the summation signs of the following two equations: 
\begin{align}
\label{fq1}
P_k(T) = P_{\underbrace{\alpha_1, \dots, \alpha_k}_{\coloneqq \omega_\alpha}}(T) = \sum_{\omega_1 \cap \omega_2 = \omega_\alpha} P_{\omega_1}(\dot{T_1}) P_{\omega_2}(\dot{T_2}) + \sum_{\substack{\omega_1 \cap \omega_2 = \emptyset, \\ \omega_1 \cup \omega_2 =\omega_\alpha}} P_{\omega_1}(\dot{T_1}) P_{\omega_2}(\dot{T_2})
\end{align}
and
\begin{align}
\label{fq2}
Q_k(T) = P_{\underbrace{\alpha_2, \dots, \alpha_{k+1}}_{\coloneqq \omega_\beta}}(T) = \sum_{\omega_1 \cap \omega_2 = \omega_\beta} P_{\omega_1}(\dot{T_1}) P_{\omega_2}(\dot{T_2}) + \sum_{\substack{\omega_1 \cap \omega_2 = \emptyset, \\ \omega_1 \cup \omega_2 =\omega_\beta}} P_{\omega_1}(\dot{T_1}) P_{\omega_2}(\dot{T_2}).
\end{align}

To show that $P_k(T) \geq Q_k(T)$, our strategy is to show that the first term (summation)
the right-hand side of  Eqn.~(\ref{fq1}) is greater or equal to 
the first  term (summation) on the right-hand side Eqn.~(\ref{fq2}). We then 
show that same inequality also holds for the second summation term.

For any set $\omega_1^{\alpha}$ and $\omega_2^{\alpha}$ there exist 
corresponding sets $\omega_1^{\beta}$ and $\omega_2^{\beta}$. The corresponding set 
(for $i \in \{1,2\}$) is:
\begin{equation} \label{correspond}
\omega_i^{\beta} = \begin{cases}
\omega_i^{\alpha} \setminus \{\alpha_1\} \cup \{\alpha_{k+1}\} & \text{if } 
\alpha_1 \in \omega_i^{\alpha} \text{ and } \alpha_{k+1} \notin 
\omega_i^{\alpha} \\
\omega_i^{\alpha} & \text{otherwise.} 
\end{cases}
\end{equation}

For the first half of this argument, take any two sets $\omega_1^{\alpha}$ and 
$\omega_2^{\alpha}$ for which  $\omega_1^{\alpha} \cap \omega_2^{\alpha} = 
\omega_\alpha$. Note that $\alpha_1$ is contained in $\omega_1^{\alpha}$ and 
$\omega_2^{\alpha}$. Then, the corresponding sets $\omega_1^{\beta}$ and 
$\omega_2^{\beta}$ (from \eqref{correspond}) satisfy $|\omega_1^{\alpha}|=|
\omega_1^{\beta}|$ and $|\omega_2^{\alpha}|=|\omega_2^{\beta}|$ and  
$\omega_1^{\beta} \cap \omega_2^{\beta} = \omega_\beta$. Here, we consider two 
cases.

\bigskip

\noindent {\bf Case (i):}  $\alpha_1 \notin \omega_1^{\beta}$ and $\alpha_1 
\notin \omega_2^{\beta}$. \\
By  Eqn.~\eqref{arg2}, we have $P_{\omega_1^{\alpha}}(\dot{T_1}) \geq 
P_{\omega_1^{\beta}}(\dot{T_1})$ and $P_{\omega_2^{\alpha}}(\dot{T_2}) \geq 
P_{\omega_2^{\beta}}(\dot{T_2})$. Thus, $P_{\omega_1^{\alpha}}(\dot{T_1}) 
P_{\omega_2^{\alpha}}(\dot{T_2}) \geq P_{\omega_1^{\beta}}(\dot{T_1}) 
P_{\omega_2^{\beta}}(\dot{T_2})$, which completes the first case. 

\noindent{\bf Case (ii):}
 $\alpha_1$ is contained in $\omega_1^{\beta}$ or in $\omega_2^{\beta}$ (not 
both). \\  
Without loss of generality, we have $\alpha_1 \in \omega_1^{\beta}$ and 
$\alpha_1 \notin \omega_2^{\beta}$. We know that $P_{\omega_1^{\alpha}}(\dot{T_1}) 
= P_{\omega_1^{\beta}}(\dot{T_1})$ and by Eqn.~\eqref{arg2}, we have 
$P_{\omega_2^{\alpha}}(\dot{T_2}) \geq P_{\omega_2^{\beta}}(\dot{T_2})$. Thus, 
$P_{\omega_1^{\alpha}}(\dot{T_1}) P_{\omega_2^{\alpha}}(\dot{T_2}) \geq 
P_{\omega_1^{\beta}}(\dot{T_1}) P_{\omega_2^{\beta}}(\dot{T_2})$ holds.

This completes the first half of the argument.

\bigskip

We now compare the last terms on the right-hand side of the Eqns. (\ref{fq1}) and (\ref{fq2}) for 
$P_k(T)$ and $Q_k(T)$.
Take any two sets $\omega_1^{\alpha}$ and $\omega_2^{\alpha}$ for which  
$\omega_1^{\alpha} \cap \omega_2^{\alpha} = \emptyset$ and $\omega_1^{\alpha} 
\cup \omega_2^{\alpha} = \omega_\alpha$. Without loss of generality, we have 
$\alpha_1 \in \omega_1^{\alpha}$ and $\alpha_1 \notin \omega_2^{\alpha}$. 
Then, the corresponding sets $\omega_1^{\beta}$ and $\omega_2^{\beta}$ (from 
Eqn.~\eqref{correspond}) satisfy $|\omega_1^{\alpha}|=|\omega_1^{\beta}|$ and $|
\omega_2^{\alpha}|=|\omega_2^{\beta}|$ such that $\omega_1^{\beta} \cap 
\omega_2^{\beta} = \emptyset$ and $\omega_1^{\beta} \cup \omega_2^{\beta} = 
\omega_\beta$. Since $\alpha_1 \in \omega_1^{\alpha}$ and $\alpha_1 \notin 
\omega_2^{\alpha}$, we have $P_{\omega_2^{\alpha}}
(\dot{T_2})=P_{\omega_2^{\beta}}(\dot{T_2})$ and, by Eqn.~\eqref{arg2}, we have 
$P_{\omega_1^{\alpha}}(\dot{T_1}) \geq P_{\omega_1^{\beta}}(\dot{T_1})$. Thus, 
$P_{\omega_1^{\alpha}}(\dot{T_1}) P_{\omega_2^{\alpha}}(\dot{T_2}) \geq 
P_{\omega_1^{\beta}}(\dot{T_1}) P_{\omega_2^{\beta}}$ holds.

Therefore, $P_k(T)$ is greater than or equal to $Q_k(T)$ for tree $T$ by induction 
from $\dot{T_1}$ and $\dot{T_2}$.
\hfill$\Box$
\end{proof}

Combining Lemma \ref{rstatesRA} with Theorem \ref{thpkqk} gives the following corollary, which states that the reconstruction accuracy of MP under the $N_r$ model is greater or equal to $\frac{1}{r}$. In addition, note that if we assume the probabilities of change to be strictly less than $\frac{r-1}{r}$,  we can then show that $P_k(T) > Q_k(T)$ by induction on $n$ similar to the proof of Theorem \ref{thpkqk}. This gives us $RA_{\rm MP}(T) > \frac{1}{r}$. 
\begin{corollary}
For any rooted binary phylogenetic tree $T$ and the $N_r$ model, we have:
\begin{align*}
RA_{\rm MP}(T) \geq \frac{1}{r}.
\end{align*}
Moreover, this inequality is strict under a continuous-time $N_r$ model where $p_e <\frac{r-1}{r}$.
\end{corollary}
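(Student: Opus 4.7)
The plan is to derive the inequality as a direct consequence of Lemma~\ref{rstatesRA} and Theorem~\ref{thpkqk}, and then upgrade it to a strict inequality by redoing the induction behind Theorem~\ref{thpkqk} with attention to when equality can (and cannot) occur. First, I would apply Lemma~\ref{rstatesRA} to write
\begin{align*}
RA_{\rm MP}(T) - \tfrac{1}{r} = \tfrac{1}{r}\sum_{k=1}^{r-1}\binom{r-1}{k}\bigl(P_k(T)-Q_k(T)\bigr).
\end{align*}
Every binomial coefficient is strictly positive, and Theorem~\ref{thpkqk} asserts $P_k(T)\geq Q_k(T)$ for each $1\leq k\leq r-1$. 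Hence every summand is nonnegative, which immediately yields $RA_{\rm MP}(T)\geq \tfrac{1}{r}$.

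For the strict inequality under the continuous-time assumption $p_e<\tfrac{r-1}{r}$, it suffices to exhibit at least one value of $k\in\{1,\dots,r-1\}$ for which $P_k(T)>Q_k(T)$ strictly. The natural choice is $k=1$, i.e.\ to show $P_1(T)>Q_1(T)$. I would proceed by induction on the number of leaves, mirroring the proof of Theorem~\ref{thpkqk}. In the base case $n=2$ with edge substitution probabilities $p_x,p_y<\tfrac{r-1}{r}$, the explicit expression
\begin{align*}
P_1(T) - Q_1(T) = \bigl(1-\tfrac{r}{r-1}p_x\bigr)\bigl(1-\tfrac{r}{r-1}p_y\bigr) + \tfrac{p_x}{r-1}\bigl(1-\tfrac{r}{r-1}p_y\bigr) + \tfrac{p_y}{r-1}\bigl(1-\tfrac{r}{r-1}p_x\bigr)
\end{align*}
already computed in the proof of Theorem~\ref{thpkqk} is a sum of strictly positive terms, hence strictly positive.

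For the inductive step, write $T$ as the join of $\dot{T_1}$ and $\dot{T_2}$ as in Fig.~\ref{fig3}. By the inductive hypothesis $P_1(T_i)>Q_1(T_i)$ for each subtree $T_i$, and the strict version of Corollary~\ref{corrstates} (noted in the remark immediately after that corollary, which relies on $p_{\rho_i}<\tfrac{r-1}{r}$) gives $P_{\{\alpha_1\}}(\dot{T_i}) > P_{\{\alpha_2\}}(\dot{T_i})$. Reinserting these strict inequalities into the expansions \eqref{fq1} and \eqref{fq2} for $k=1$, and observing that at least one matched pair of summands is strictly ordered while all others remain non-strictly ordered (by the same case analysis as in Theorem~\ref{thpkqk}), yields $P_1(T)>Q_1(T)$. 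Combined with the opening identity this gives $RA_{\rm MP}(T)>\tfrac{1}{r}$.

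The only real obstacle is bookkeeping in the strict step: ensuring that at least one term in the decomposition of $P_1(T)-Q_1(T)$ is strictly positive rather than merely nonnegative. This is guaranteed by the term in \eqref{fq1} corresponding to $\omega_1=\omega_2=\{\alpha_1\}$, which appears in $P_1(T)$ and whose counterpart $\omega_1^\beta=\omega_2^\beta=\{\alpha_2\}$ in $Q_1(T)$ is strictly smaller under Case (i) of the proof of Theorem~\ref{thpkqk}, precisely because the strict form of Corollary~\ref{corrstates} applies whenever every edge satisfies $p_e<\tfrac{r-1}{r}$.
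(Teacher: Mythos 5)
Your proof is correct and follows essentially the same route as the paper: the weak inequality is exactly the paper's one-line combination of Lemma~\ref{rstatesRA} with Theorem~\ref{thpkqk}, and your strictness argument is precisely the induction ``similar to the proof of Theorem~\ref{thpkqk}'' that the paper sketches in the remark preceding the corollary (you specialise to $k=1$, which suffices, and in fact supply more detail than the paper does). One small nitpick: in the base case the three terms are not all strictly positive (e.g.\ if $p_x=0$ two of them vanish), but the term $(1-\tfrac{r}{r-1}p_x)(1-\tfrac{r}{r-1}p_y)$ is strictly positive under the hypothesis $p_x,p_y<\tfrac{r-1}{r}$, so the conclusion stands.
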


\section{A combinatorial sufficient condition for accurate ancestral state reconstruction}
\label{combo}

In this penultimate section, we present a new combinatorial property of ancestral state reconstruction using parsimony.   More precisely, we provide a sufficient condition for MP to recover the ancestral state at an interior vertex correctly from the observed states at the leaves. Note that this does not make any model assumptions (as in the previous section) as to how the character $f$ is generated -- it simply requires the state changes to be spread  sufficiently thinly in the tree as one moves way from the interior vertex.  This result complements a related (but quite different) result  from   \cite{ste05}  (Theorem 9.4.5). 

 Let $n_i$ ($i=1,2,\ldots$)  be the number of edges  descended from $v$ and separated from $v$ by $i-1$ other edges on which a substitution occurs.   Thus $n_1$ counts the number (0,1,2) of edges out of $v$ on which substitutions occur.  Note that
$n_i$ is not just a function of the tree and the character at the leaves; it depends on the actual evolution of this character on the tree. We refer to $n_i$ as the {\em substitution spectrum} of the character on the tree relative to the vertex $v$.

The following theorem can be regarded as a type of combinatorial local `safety radius' for MP to infer
the ancestral state at a given vertex correctly (even though the states at other vertices may not be correctly reconstructed).

\begin{theorem}
Consider any binary tree $T$  on any number of leaves, and any character (involving any number of states) that has evolved on this tree with a substitution spectrum relative to vertex $v$ that satisfies the inequality:
\begin{equation}
\label{boundsum}
\sum_{k \geq 1} n_k \left(\frac{1}{\sqrt{2}}\right)^k < \frac{1}{2}.
\end{equation}  
The  set of most parsimonious state at vertex $v$ estimated from the states at the leaves descending from  $v$ consists precisely of the true ancestral state at $v$  (i.e. $\FS(v) = \{\alpha\}$). 
  \end{theorem}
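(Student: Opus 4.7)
The plan is to define a potential $\Phi(u) = \sum_{k \geq 1} n_k(u)(1/\sqrt{2})^k$ at each vertex $u$ in the subtree rooted at $v$, where $n_k(u)$ is the substitution spectrum relative to $u$, so that $\Phi(v)$ is exactly the sum appearing in the hypothesis. Writing $u_1, u_2$ for the children of a non-leaf $u$, $\delta_i \in \{0,1\}$ for the substitution indicator on edge $(u, u_i)$, and $\alpha_u$ for the true state at $u$, a direct count yields the recursion
\[
\sqrt{2}\,\Phi(u) \;=\; \delta_1 + \delta_2 + \Phi(u_1) + \Phi(u_2),
\]
since substitutions directly out of $u$ contribute weight $1/\sqrt{2}$ each, while substitutions deeper in the subtree of $u_i$ gain exactly one extra edge of depth when measured from $u$ rather than from $u_i$.

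I would then run a mutual induction on the size of the subtree below $u$ to prove simultaneously:
\begin{itemize}
\item[(a)] if $\Phi(u) < 1/2$, then $\FS(u) = \{\alpha_u\}$;
\item[(b)] if $\Phi(u) < 1/\sqrt{2}$, then $\alpha_u \in \FS(u)$.
\end{itemize}
The base case (a leaf) is trivial since $\Phi = 0$ and $\FS(u) = \{f(u)\} = \{\alpha_u\}$; the theorem itself is (a) applied at $u = v$.

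For the inductive step of (a): the bound $\Phi(u) < 1/2$ translates via the recursion to $\delta_1 + \delta_2 + \Phi(u_1) + \Phi(u_2) < 1/\sqrt{2} < 1$, which forces $\delta_1 = \delta_2 = 0$ (so $\alpha_{u_i} = \alpha_u$) and $\Phi(u_1) + \Phi(u_2) < 1/\sqrt{2}$. Labelling the children so that $\Phi(u_2) \leq \Phi(u_1)$, I get $\Phi(u_2) < 1/(2\sqrt{2}) < 1/2$ and $\Phi(u_1) < 1/\sqrt{2}$. The induction hypothesis (a) at $u_2$ gives $\FS(u_2) = \{\alpha_u\}$, and (b) at $u_1$ gives $\alpha_u \in \FS(u_1)$, so $\FS(u_1) \cap \FS(u_2) = \{\alpha_u\}$ and hence $\FS(u) = \{\alpha_u\}$. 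For the inductive step of (b): $\Phi(u) < 1/\sqrt{2}$ similarly forces $\delta_1 = \delta_2 = 0$ and $\Phi(u_1) + \Phi(u_2) < 1$, so the smaller child satisfies $\Phi(u_2) < 1/2$ and thus $\FS(u_2) = \{\alpha_u\}$ by (a). The Fitch rule then forces $\alpha_u \in \FS(u)$ whatever $\FS(u_1)$ looks like: either $\alpha_u \in \FS(u_1)$ so the intersection equals $\{\alpha_u\}$, or the intersection is empty and the rule outputs $\FS(u_1) \cup \{\alpha_u\}$, which still contains $\alpha_u$.

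The only real design choice, and what I expect to be the main subtlety, is the simultaneous pairing of the two thresholds $1/2$ and $1/\sqrt{2}$: after pulling the factor $\sqrt{2}$ from the recursion onto the right-hand side, the pigeonhole averaging over the two children drops the smaller $\Phi(u_i)$ by a further factor of $1/2$, landing exactly in the range where (a) is available at $u_2$, while the larger $\Phi(u_i)$ lands in the range where (b) is available at $u_1$. Once this pairing is engineered, the argument never needs to consult the actual replacement state attached to any substitution, consistent with the purely combinatorial nature of the hypothesis.
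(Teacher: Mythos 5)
Your proof is correct. The recursion $\sqrt{2}\,\Phi(u)=\delta_1+\delta_2+\Phi(u_1)+\Phi(u_2)$ follows from $n_1(u)=\delta_1+\delta_2$ and $n_k(u)=n_{k-1}(u_1)+n_{k-1}(u_2)$ for $k\geq 2$, and the two inductive steps check out: in (a) the bound $\delta_1+\delta_2+\Phi(u_1)+\Phi(u_2)<1/\sqrt{2}$ kills both substitution indicators and puts the smaller child below $1/(2\sqrt{2})<1/2$ and the larger below $1/\sqrt{2}$, so (a) and (b) apply to the two children respectively and the Fitch rule yields $\FS(u)=\{\alpha_u\}$; in (b) only the smaller child is needed. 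The paper reaches the same conclusion by a different bookkeeping: it first reduces to a complete balanced binary tree $T_h$ (padding shallow leaves with substitution-free subtrees), then inducts on the height $h$ in steps of two, expanding the potential over the four grandchild subtrees of $v$. Since two levels of the weight $1/\sqrt{2}$ contribute a factor of exactly $1/2$, the hypothesis forces at least three of the four grandchild potentials to be $<1/2$, hence (by the induction hypothesis) three grandchildren with Fitch set $\{\alpha\}$, and a small combinatorial lemma (three of four grandchildren with Fitch set $\{\alpha\}$ force $\FS(v)=\{\alpha\}$) closes the induction. Your statement (b) is precisely the information the paper extracts ``for free'' from the fourth grandchild via the union clause of the Fitch rule, so the two arguments are the same in spirit; what your single-level mutual induction with the paired thresholds $1/2$ and $1/\sqrt{2}$ buys is the elimination of the reduction to balanced trees, of the two-level expansion, and of the height-based induction --- the argument runs directly on an arbitrary binary tree by induction on subtree size.
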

\begin{proof}
First observe that is sufficient to establish this result for a complete balanced binary tree $T_h$ of arbitrary height $h$, with $v$ being the root of $T_h$.
We use induction on the height $h$ of the tree. 
For $h\leq 2$ we have $n_k=0$ for all $k>2$.
Inequality~(\ref{boundsum})  ensures that $n_1=n_2=0$, in which
case all leaves are in state $\alpha$ and so the Fitch set $\FS(v)$ for $v$  is the set  $\{\alpha\}$. This establishes the result for $h \leq 2$.

For the induction step, suppose that the result holds for $T_{h-2}$ and $T_{h-1}$ and consider the tree
$T_{h}$ together with a character evolved on $T_{h}$ for which Inequality~(\ref{boundsum}) applies for vertex $v$. 
As before, this inequality ensures that none of the six edges at distance 1 or 2 descending from  $v$ have a substitution on them. 

 If $T^1$ and $T^2$ are the two maximal subtrees of $T_{h}$, then (i) each of these trees is of the type $T_{h-1}$, and
(ii) the following identity holds for all $k$:
\begin{equation}
\label{thetaeq0}
n_k = n^1_{k-1} + n^2_{k-1},
\end{equation}
 where $n^1_i$ (resp. $n^2_i$) is substitution spectrum for the character's evolution on $T^1$ and $T^2$ (note that we are  using the fact that no substitution occurs on either of the two edges outgoing from $v$, by Inequality~(\ref{boundsum})).  

Thus if we let $$p_h({\bf n}, \theta) := \sum_{k \geq 1} n_k \theta^k,$$
 where ${\bf n} =[n_k]$, then Eqn.~(\ref{thetaeq0}) allows us to write: 
$$p_h({\bf n}, \theta) = \theta \cdot [p_{h-1}({\bf n^1}, \theta)+  p_{h-1}({\bf n^2}, \theta)].$$

We can extend this argument one level further to obtain the following:
\begin{equation}
\label{thetaeq}
p_h({\bf n}, \theta) = \theta^2 \cdot [p_{h-2}({\bf n^{11}}, \theta)+  p_{h-2}({\bf n^{12}}, \theta) + p_{h-2}({\bf n^{21}}, \theta)+  p_{h-2}({\bf n^{22}}, \theta)],
\end{equation}
where ${\bf n^{ij}}$ refers the substitution spectra on the four  subtrees of type $T_{h-2}$ that are two edges descending from the vertex $v$ in $T$.
Note that in writing Eqn. (\ref{thetaeq}) we are again using the fact that Inequality~(\ref{boundsum}) precludes any substitutions in the six edges descended from $v$ and at distance at most $2$ from it.

Now put $\theta = \frac{1}{\sqrt{2}}$ in Eqn.~(\ref{thetaeq}) and let $x_{ij}:=p_{h-2}({\bf n^{ij}})$.  We then obtain:
\begin{equation}
\label{thetaeq2} p_h({\bf n}, \frac{1}{\sqrt{2}}) =\frac{1}{2}(x_{11} + x_{12} + x_{21}+ x_{22}).
\end{equation}

Since we are assuming that $p_h({\bf n}, \frac{1}{\sqrt{2}}) < \frac{1}{2}$ (by  Inequality~(\ref{boundsum})), it follows from Eqn.~(\ref{thetaeq2}) that at least three of the four terms $x_{ij}$ are strictly less than $\frac{1}{2}$, since if two of them were 
greater or equal to $\frac{1}{2}$ then $\frac{1}{2}(x_{11} + x_{12} + x_{21}+ x_{22}) \geq \frac{1}{2}.$  
By the induction hypothesis,  three (or four) of the corresponding vertices (two edges descending from $v$)  have an $\FS$ value of $\{\alpha\}$, as shown in Fig.~\ref{triangle_tree}.

\begin{figure}[htb]
\centering
\includegraphics[scale=0.8]{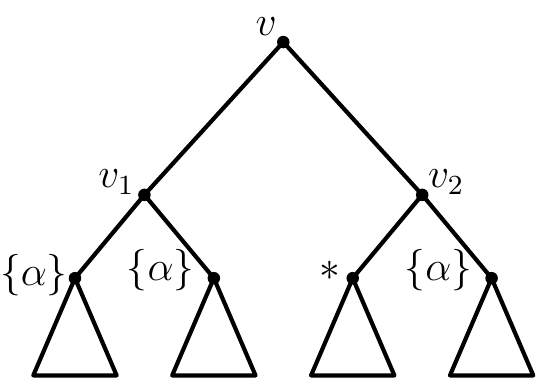}
\caption{If the Fitch sets  at the roots  of three of the four subtrees at distance 2 from $v$ in $T_h$ $(h \geq 3)$ consist of the singleton set $\{\alpha\}$, then $\FS(v)=\{\alpha\}$ as well, regardless of the Fitch set $*$ at the root of the fourth subtree. }
\label{triangle_tree}
\end{figure}

We now invoke a simple combinatorial observation: If a vertex $v$ in a  binary tree has the property that at least three vertices
that are two edges descended  from $v$ have their Fitch set $\FS$  equal to $\{\alpha\}$, then $\FS(v)= \{\alpha\}$. 
 This establishes the induction step, and thereby the theorem.
\hfill$\Box$
\end{proof}

{\bf Remark:}  An interesting question is the following: What is the smallest value of $\theta$ for which there is a constant $t$ so that
the condition $p_h({\bf n}, \theta)<t$ implies that $\FS(v)=\{\alpha\}$ for all values of $h$ and substitution spectra ${\bf n}$? We have shown that the value $\theta = \frac{1}{\sqrt{2}} \approx 0.7071$ (or any larger value) suffices, and it is known (from Theorem 2 of \cite{ste95}) that  $\theta$ cannot be smaller than the reciprocal of the golden ratio (i.e. $2/(1+\sqrt{5}) \approx 0.6180)$.

\section{Concluding comments}

Theorem~\ref{mpandcoin} demonstrated that $RA_{\rm MP}(T) \geq RA_\varphi(T)$ when $r=2$.  An interesting question is whether or not this holds more generally. This leads us to pose the following conjecture:

\begin{conjecture}
\label{conj1}
Let $T$ be a rooted binary phylogenetic tree.  Under the $N_r$ model, the reconstruction accuracy of MP is at least equal to the reconstruction accuracy of the coin-toss method:
\begin{align*}
RA_{\rm MP}(T) \geq RA_\varphi(T).
\end{align*}
\end{conjecture}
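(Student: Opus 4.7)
The plan is to adapt the inductive strategy of Theorem~\ref{mpandcoin} to general $r$. Since the cherry-collapse identity (Theorem~\ref{ra}) is specific to $r=2$---conditioning on a two-leaf cherry being monochromatic does not reduce to a single equivalent pendant edge when $r\geq 3$, because the conditional distribution at the parent vertex is no longer of pure $N_r$ form---I would instead decompose $T$ at its root $\rho$ into its two stem-extended maximal subtrees $\dot{T_1}$ and $\dot{T_2}$ (as in Fig.~\ref{fig3}). Theorem~\ref{coin-toss-average} gives $RA_{\varphi}(T) = \tfrac{1}{2}(RA_{\varphi}(\dot{T_1}) + RA_{\varphi}(\dot{T_2}))$, and applying the induction hypothesis to each strictly smaller $\dot{T_i}$ yields $RA_{\rm MP}(\dot{T_i}) \geq RA_{\varphi}(\dot{T_i})$. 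The conjecture therefore reduces to the key inequality
\begin{equation*}
RA_{\rm MP}(T) \;\geq\; \tfrac{1}{2}\bigl(RA_{\rm MP}(\dot{T_1}) + RA_{\rm MP}(\dot{T_2})\bigr). \tag{$\ast$}
\end{equation*}

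To establish $(\ast)$, I would condition on the independent pair of Fitch sets $(\FS_1,\FS_2)$ returned by MP on $\dot{T_1}$ and $\dot{T_2}$: the Fitch set at $\rho$ is $\FS_1 \cap \FS_2$ when this intersection is non-empty and $\FS_1 \cup \FS_2$ otherwise. Expanding both sides of $(\ast)$ using Lemma~\ref{rstatesRA} together with the joint decomposition (\ref{fq1})--(\ref{fq2}) from the proof of Theorem~\ref{thpkqk} produces an algebraic expression for $D := RA_{\rm MP}(T) - \tfrac{1}{2}(RA_{\rm MP}(\dot{T_1}) + RA_{\rm MP}(\dot{T_2}))$ as a bilinear form in the probabilities $P_k(\dot{T_i})$ and $Q_k(\dot{T_i})$. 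Grouping pairs $(\omega_1,\omega_2)$ into equivalence classes under the $S_{r-1}$-symmetry of the $N_r$ model (permutations fixing the true root state $\alpha$) allows these contributions to be summed orbit-by-orbit.

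The main obstacle is that pointwise comparison within an orbit fails: when exactly one of $\FS_1,\FS_2$ contains $\alpha$ while the two sets share a non-empty intersection disjoint from $\alpha$, MP returns zero chance of recovering $\alpha$, whereas the subtree-averaged contribution is strictly positive. The plan to overcome this is to pair each such ``MP loses'' orbit with a corresponding ``MP wins'' orbit via the substitution rule \eqref{correspond} already used in the proof of Theorem~\ref{thpkqk}, and then invoke the domination $P_k(\dot{T_i}) \geq Q_k(\dot{T_i})$ (Theorem~\ref{thpkqk}) to shift probability mass from losing to winning orbits. The template comes from the $r=2$ case, where direct expansion (using $P_1(\dot{T_i}) + Q_1(\dot{T_i}) + P_2(\dot{T_i}) = 1$) yields
\begin{equation*}
D = \tfrac{1}{2}\bigl[P_2(\dot{T_2})\bigl(RA_{\rm MP}(\dot{T_1}) - \tfrac{1}{2}\bigr) + P_2(\dot{T_1})\bigl(RA_{\rm MP}(\dot{T_2}) - \tfrac{1}{2}\bigr)\bigr],
\end{equation*}
a manifestly non-negative combination, by Corollary~\ref{1-p}, of the quantities $RA_{\rm MP}(\dot{T_i}) - \tfrac{1}{r}$ (with $r=2$). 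For general $r$ the target is an analogous identity writing $D$ as a non-negative combination of the quantities $RA_{\rm MP}(\dot{T_i}) - \tfrac{1}{r}$ together with the differences $P_k(\dot{T_i}) - Q_k(\dot{T_i})$, both of which are non-negative by the Corollary at the end of Section~\ref{rst} and by Theorem~\ref{thpkqk} respectively. The hardest part is expected to be the combinatorial bookkeeping that aligns the orbit pairings with the correct $P_k - Q_k$ factors, since the number of distinct orbit types grows rapidly with $r$.
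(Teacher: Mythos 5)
You are attempting to prove a statement that the paper itself leaves open: Conjecture~\ref{conj1} is not proved in the paper for any $r>2$; the authors establish only the $r=2$ case (Theorem~\ref{mpandcoin}) and record, without proof for general $r$, that the conjecture is equivalent to Inequality~(\ref{conjeq}). Your reduction to $(\ast)$ via Theorem~\ref{coin-toss-average}, Lemma~\ref{lemhelps} and induction is sound and matches exactly the equivalence the paper states, and your closed form for $D$ when $r=2$ reproduces the Appendix computation (there $P_2(\dot{T_i})=P_{\alpha\beta}(\dot{T_i})$ and $RA_{\rm MP}(\dot{T_i})-\tfrac12\ge 0$ by Corollary~\ref{1-p} together with Lemma~\ref{lemhelps}). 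Up to that point you are on firm ground.

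The gap is in the step that would actually prove something new: the claim that for general $r$ the difference $D$ can be written as a non-negative combination of the quantities $RA_{\rm MP}(\dot{T_i})-\tfrac1r$ and $P_k(\dot{T_i})-Q_k(\dot{T_i})$ is stated as a ``target'' and never derived. This is precisely the hard content of the conjecture, and there are concrete reasons to doubt that the proposed orbit pairing delivers it. The weights $1/|\mathcal{R}|$ in Eqn.~(\ref{RAdef}) do not interact additively with the Fitch union/intersection rule: a pair $(\FS_1,\FS_2)$ with $\alpha\in\FS_1$, $\alpha\notin\FS_2$ and $\FS_1\cap\FS_2\neq\emptyset$ contributes $0$ to $RA_{\rm MP}(T)$ but $\tfrac{1}{2|\FS_1|}$ of its probability to the right-hand side of $(\ast)$, and the correspondence~(\ref{correspond}) is a map on single sets, not on pairs weighted by $1/|\FS_1\cap\FS_2|$ versus $\tfrac12(1/|\FS_1|+1/|\FS_2|)$. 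You would need to exhibit an explicit injection from ``losing'' configurations to ``winning'' ones that is measure-dominating after these unequal weights are applied, and to verify that the slack is controlled by $P_k-Q_k$; nothing in Theorem~\ref{thpkqk} or Corollary~\ref{corrstates} does this for you. Until that bookkeeping is carried out (or a counterexample to the orbit-by-orbit domination is ruled out), the proposal is a plausible research programme rather than a proof, and the statement remains a conjecture.
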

This conjecture holds for $n=2$ and all values of $r \geq 2$, as it is an exact equality in that case. 

By using Theorem~\ref{coin-toss-average}, and induction on the number of leaves,  it can be shown that Conjecture~\ref{conj1} is equivalent to the following statement:
\begin{equation}
\label{conjeq}
RA_{\rm MP}(T) \geq  \frac{1}{2} \big( RA_{\rm MP}(\dot{T_1}) + RA_{\rm MP}(\dot{T_2}) \big),
\end{equation}
where $\dot{T}_1$ and $\dot{T_2}$ are the two pending subtrees of $T$ as in Fig.~\ref{fig3}.

Inequality~(\ref{conjeq}) holds when $r=2$ since, as stated, it is equivalent to the above conjecture, and this holds when $r=2$ by
Theorem~\ref{mpandcoin}.  In the Appendix we give a direct alternative argument to justify Inequality~(\ref{conjeq}) in the case $r=2$.

\begin{acknowledgements}
The first author thanks the Ernst-Moritz-Arndt-University  Greifswald for the Landesgraduiertenf{\"o}rderung 
studentship and the German Academic Exchange Service (DAAD) for the 
DAAD-Doktorandenstipendium.  The last author thanks the New Zealand Marsden Fund (UOC-1709).  We also thank Mareike Fischer for several helpful comments.
\end{acknowledgements}
\bibliographystyle{model2-names}
\bibliography{references} 

\section{Appendix: Direct proof of  Inequality~(\ref{conjeq}) when $r=2$}
\begin{proof}
For the $N_2$ model, $RA_{\rm MP}(T) = P_{\alpha}(T) + \frac{1}{2} P_{\alpha\beta}(T)$. Thus,
\begin{align}
RA_{\rm MP}(T)  &= P_{\alpha}(\dot{T_1})P_{\alpha}(\dot{T_2}) + P_{\alpha}(\dot{T_1})P_{\alpha\beta}(\dot{T_2}) + P_{\alpha\beta}(\dot{T_1})P_{\alpha}(\dot{T_2}) \nonumber \\
&\qquad + \frac{1}{2} \big(P_{\alpha\beta}(\dot{T_1})P_{\alpha\beta}(\dot{T_2}) + P_{\alpha}(\dot{T_1})P_{\beta}(\dot{T_2}) + P_{\beta}(\dot{T_1})P_{\alpha}(\dot{T_2})\big). \label{(ra)}
\end{align}
Moreover:
\begin{align}
&\frac{1}{2} \big( RA_{\rm MP}(\dot{T_1}) + RA_{\rm MP}(\dot{T_2}) \big) = \frac{1}{2} \big( P_{\alpha}(\dot{T_1}) + \frac{1}{2} P_{\alpha\beta}(\dot{T_1}) + P_{\alpha}(\dot{T_2}) + \frac{1}{2} P_{\alpha\beta}(\dot{T_2}) \big) \nonumber \\
&= \frac{1}{2} \big( P_{\alpha}(\dot{T_1}) + P_{\alpha}(\dot{T_2}) \big) + \frac{1}{4} \big( P_{\alpha\beta}(\dot{T_1}) + P_{\alpha\beta}(\dot{T_2}) \big) \nonumber \\
&= \frac{1}{2} \big( P_{\alpha}(\dot{T_1}) (P_{\alpha}(\dot{T_2})+P_{\beta}(\dot{T_2})+P_{\alpha\beta}(\dot{T_2})) + P_{\alpha}(\dot{T_2}) (P_{\alpha}(\dot{T_1})+P_{\beta}(\dot{T_1})+P_{\alpha\beta}(\dot{T_1})) \big) \nonumber \\
&\qquad+ \frac{1}{4} \big( P_{\alpha\beta}(\dot{T_1}) + P_{\alpha\beta}(\dot{T_2}) \big)
\text{(by the law of total probability)} \nonumber \\
&=\frac{1}{2} \big( 2P_{\alpha}(\dot{T_1}) P_{\alpha}(\dot{T_2}) +P_{\alpha}(\dot{T_1}) P_{\beta}(\dot{T_2}) + P_{\alpha}(\dot{T_1}) P_{\alpha\beta}(\dot{T_2}) + P_{\beta}(\dot{T_1})P_{\alpha}(\dot{T_2}) + P_{\alpha\beta}(\dot{T_1})P_{\alpha}(\dot{T_2}) \big) \nonumber \\
&\qquad+ \frac{1}{4} \big( P_{\alpha\beta}(\dot{T_1}) + P_{\alpha\beta}(\dot{T_2}) \big) \label{(ra)/2}.
\end{align}
In order to show that $RA_{\rm MP}(T) \geq \frac{1}{2} \big( RA_{\rm MP}(\dot{T_1}) + RA_{\rm MP}(\dot{T_2}) \big)$, we establish the following inequality:   $$RA_{\rm MP}(T) - \frac{1}{2} \big( RA_{\rm MP}(\dot{T_1}) + RA_{\rm MP}(\dot{T_2}) \big) \geq 0.$$
 By \eqref{(ra)} and \eqref{(ra)/2} we have:
\begin{align*}
&RA_{\rm MP}(T) - \frac{1}{2} \big( RA_{\rm MP}(\dot{T_1}) + RA_{\rm MP}(\dot{T_2}) \big) \\
&= P_{\alpha}(\dot{T_1})P_{\alpha}(\dot{T_2}) + P_{\alpha}(\dot{T_1})P_{\alpha\beta}(\dot{T_2}) + P_{\alpha\beta}(\dot{T_1})P_{\alpha}(\dot{T_2}) \nonumber \\
&\qquad + \frac{1}{2} \big(P_{\alpha\beta}(\dot{T_1})P_{\alpha\beta}(\dot{T_2}) + P_{\alpha}(\dot{T_1})P_{\beta}(\dot{T_2}) + P_{\beta}(\dot{T_1})P_{\alpha}(\dot{T_2})\big) \\
&\qquad - \frac{1}{2} \big( 2P_{\alpha}(\dot{T_1}) P_{\alpha}(\dot{T_2}) +P_{\alpha}(\dot{T_1}) P_{\beta}(\dot{T_2}) + P_{\alpha}(\dot{T_1}) P_{\alpha\beta}(\dot{T_2}) + P_{\beta}(\dot{T_1})P_{\alpha}(\dot{T_2}) + P_{\alpha\beta}(\dot{T_1})P_{\alpha}(\dot{T_2}) \big) \\
&\qquad - \frac{1}{4} \big( P_{\alpha\beta}(\dot{T_1}) + P_{\alpha\beta}(\dot{T_2}) \big) \\
&= \frac{1}{2}P_{\alpha}(\dot{T_1})P_{\alpha\beta}(\dot{T_2}) + \frac{1}{2}P_{\alpha\beta}(\dot{T_1})P_{\alpha}(\dot{T_2}) + \frac{1}{2} P_{\alpha\beta}(\dot{T_1})P_{\alpha\beta}(\dot{T_2}) - \frac{1}{4} \big( P_{\alpha\beta}(\dot{T_1}) + P_{\alpha\beta}(\dot{T_2}) \big) \\
&= \frac{1}{2} P_{\alpha\beta}(\dot{T_1}) \big(P_{\alpha}(\dot{T_2}) + \frac{1}{2}P_{\alpha\beta}(\dot{T_2}) - \frac{1}{2} \big) + \frac{1}{2} P_{\alpha\beta}(\dot{T_2}) \big(P_{\alpha}(\dot{T_1}) + \frac{1}{2}P_{\alpha\beta}(\dot{T_1}) - \frac{1}{2} \big) \\
&= \frac{1}{2} P_{\alpha\beta}(\dot{T_1}) \big( RA_{\rm MP}(\dot{T_2}) - \frac{1}{2} \big) + \frac{1}{2} P_{\alpha\beta}(\dot{T_2}) \big(RA_{\rm MP}(\dot{T_1}) - \frac{1}{2} \big). \\
\end{align*}
This last expression is non-negative because the reconstruction accuracy under the $N_2$ model is greater or equal to $\frac{1}{2}$ by Corollary~\ref{1-p}, and (by  Lemma~\ref{lemhelps}),
$RA_{\rm MP}(\dot{T}) \geq \frac{1}{2}$ if and only if $RA_{\rm MP}(T) \geq \frac{1}{2}$.
\hfill$\Box$
\end{proof}
\end{document}